\DeclareMathOperator{\per}{per}
\newcommand{\integ}{\mathbb{Z}}
\newcommand{\eps}{\varepsilon}
\newcommand{\ceil}[1]{\lceil #1 \rceil}
\newcommand{\floor}[1]{\lfloor #1 \rfloor}
\newcommand{\llist}{\mathcal{L}} 
\newcommand{\occ}{\mathsf{occ}}
\newcommand{\poly}{\text{poly}}
\newcommand{\Oh}{\mathcal{O}}
   \newtheorem{theorem}{Theorem}
   \newtheorem{lemma}[theorem]{Lemma}
   \newtheorem{observation}[theorem]{Observation}
   \newtheorem{fact}[theorem]{Fact}
   \theoremstyle{definition}   
   \newtheorem{definition}[theorem]{Definition}
   \theoremstyle{remark}
\begin{document}

\title{Approximating LZ77 via\\Small-Space Multiple-Pattern Matching}

\author[1]{Johannes Fischer\thanks{Supported by Academy of Finland grant 268324.}}
\author[2]{Travis Gagie}
\author[3]{Paweł Gawrychowski\thanks{Work done while the author held a post-doctoral position at Warsaw Center of Mathematics and Computer Science.}}
\author[3]{Tomasz Kociumaka\thanks{Supported by Polish budget funds for science in 2013-2017 as a research project under the `Diamond Grant' program.}}

\affil[1]{TU Dortmund, Germany}
\affil[ ]{\texttt{johannes.fischer@cs.tu-dortmund.de}}
\affil[2]{Helsinki Institute for Information Technology (HIIT), 
Department of Computer Science, University of Helsinki, Finland}
\affil[ ]{\texttt{travis.gagie@cs.helsinki.fi}}
\affil[3]{Institute of Informatics,
    University of Warsaw, Poland}
\affil[ ]{\texttt{\{gawry,kociumaka\}@mimuw.edu.pl}}

\maketitle

\begin{abstract}
We generalize Karp-Rabin string matching to handle multiple patterns in $\Oh(n \log n + m)$ time and $\Oh(s)$ space, where
$n$ is the length of the text and $m$ is the total length of the $s$ patterns, returning correct answers with high probability.
As a prime application of our algorithm, we show how to approximate the LZ77 parse of a string of length $n$.
If the optimal parse consists of $z$ phrases, using only $\Oh(z)$ working space we can return a parse consisting of at most $(1+\eps)z$ phrases in $\Oh(\eps^{-1}n\log n)$ time, for any $\eps\in (0,1]$.
As previous quasilinear-time algorithms for LZ77 use $\Omega(n/\poly\log n)$ space, but $z$ can be exponentially small in $n$, these improvements in space are substantial.
\end{abstract}

\section{Introduction}
\label{sec:introduction}

Multiple-pattern matching, the task of locating the occurrences of $s$ patterns of total length $m$ in a single text of length $n$,
is a fundamental problem in the field of string algorithms. 
The algorithm by Aho and Corasick~\cite{AC} solves this problem using $\Oh (n + m)$ time and $\Oh (m)$ working space in addition to the space needed for the text and patterns.
To list all $\occ$ occurrences rather than, e.g., the leftmost ones, extra $\Oh(\occ)$ time is necessary.
When the space is limited, we can use a compressed
Aho-Corasick automaton~\cite{HonKSTV13}. In extreme cases, one could apply a linear-time constant-space single-pattern
matching algorithm sequentially for each pattern in turn, at the cost of increasing the running time to $\Oh (n\cdot s+m)$. Well-known examples of such algorithms include
those by Galil and Seiferas~\cite{GS83}, Crochemore and Perrin~\cite{CP91}, and Karp and Rabin~\cite{KR87} (see~\cite{BreslauerGM13} for a recent survey).

It is easy to generalize Karp-Rabin matching to handle multiple patterns in $\Oh(n + m)$ expected time and $\Oh(s)$ working space provided that
all patterns are of the same length~\cite{GumLipton}. 
To do this, we store the fingerprints of the patterns in a hash table, and then slide a window over
the text maintaining the fingerprint of the fragment currently in the window.
The hash table lets us check if the fragment is an occurrence of a pattern.
If so, we report it and update the hash table so that every pattern is returned at most once.
This is a very simple and actually applied idea~\cite{Wikipedia}, but it is not clear how to extend it for patterns with many distinct lengths.
In this paper we develop a dictionary matching algorithm which works for any set of patterns in $\Oh(n\log n +m)$ time and $\Oh(s)$ working space,
assuming that read-only random access to the text and the patterns is available.
If required, we can compute for every pattern its longest
prefix occurring in the text, also in $\Oh(n\log n+m)$ time and $\Oh(s)$ working space.

In a very recent independent work Clifford et al.~\cite{esa2} gave a dictionary matching algorithm in the streaming model.
In this setting the patterns and later the text are scanned once only (as opposed to read-only random access) and an occurrence
needs to be reported immediately after its last character is read. 
Their algorithm uses $\Oh(s\log \ell)$ space
and takes $\Oh(\log \log (s+\ell))$ time per character where $\ell$ is the length of the longest pattern ($\frac{m}{s}\le \ell \le m$). Even though some of the ideas used in both results are similar,
one should note that the streaming and read-only models are quite different.
In particular, computing the longest prefix occurring in the text for every
pattern requires $\Omega(m\log \min(n,|\Sigma|))$  bits of space in the streaming model, as opposed to the $\Oh(s)$ working
space achieved by our solution in the read-only setting.

As a prime application of our dictionary matching algorithm, we show how to approximate the Lempel-Ziv 77 (LZ77) parse~\cite{ziv77universal} of a text of length $n$ using working space proportional to the number of phrases (again, we assume read-only random access to the text).
Computing the LZ77 parse in small space is an issue of high importance, with space being a frequent bottleneck of today's systems.
Moreover, LZ77 is useful not only for data compression, but also as a way to speed up algorithms~\cite{LohreySurvey}.
We present a general approximation algorithm working in $\Oh(z)$ space for inputs
admitting LZ77 parsing with $z$ phrases. For any $\eps \in (0,1]$, the algorithm can be used
to produce a parse consisting of $(1+\eps)z$ phrases in $\Oh(\eps^{-1}n\log n)$ time.

To the best of our knowledge, approximating LZ77 factorization in small space has not been considered before,
and our algorithm is significantly more efficient than methods producing the exact answer.
A recent sublinear-space algorithm, due to Kärkkäinen et al.~\cite{kaerkkaeinen13lightweight}, runs in $\Oh(nd)$ time and uses $\Oh(n/d)$ space, for any parameter $d$. 
An earlier online solution by Gasieniec et al.~\cite{DBLP:conf/swat/GasieniecKPR96}
uses $\Oh(z)$ space and takes $\Oh(z^2\log^2z)$ time for each character appended.
Other previous methods use significantly more space when the parse is small relative to $n$; see~\cite{2015arXiv150402605F}
for a recent discussion.

\paragraph{\bfseries{Structure of the paper.}}
Sect.~\ref{sec:preliminaries} introduces terminology and recalls several known concepts. 
This is followed by the description of our dictionary matching algorithm.
In Sect.~\ref{sec:short} we show how to process patterns of length at most $s$ and
in Sect.~\ref{sec:long} we handle longer patterns, with different procedures for repetitive and non-repetitive ones.
In Sect.~\ref{sec:prefixes} we extend the algorithm to compute, for every pattern,
the longest prefix occurring in the text.
Finally, in Sect.~\ref{sec:lz77approx}, we apply the dictionary matching algorithm to construct an approximation of the LZ77 parsing, and in Sect.~\ref{app:lv} we explain how to modify the algorithms to make them Las Vegas.

\paragraph{\bfseries{Model of computation.}}
Our algorithms are designed for the word-RAM with $\Omega (\log n)$-bit words and assume integer alphabet of polynomial size.
The usage of Karp-Rabin fingerprints makes them Monte Carlo randomized:
the correct answer is returned with high probability, i.e., the error probability is inverse polynomial with respect to input size,
where the degree of the polynomial can be set arbitrarily large. 
With some additional effort, our algorithms can be turned into Las Vegas randomized, where the answer is always correct and the time bounds hold with high probability.
Throughout the whole paper,
we assume read-only random access to the text and the patterns, and we do not
include their sizes while measuring space consumption.

\section{Preliminaries}
\label{sec:preliminaries}

We consider finite words over an integer alphabet $\Sigma=\{0,\ldots,\sigma-1\}$, where $\sigma=\poly(n+m)$.
For a word $w=w[1]\ldots w[n] \in \Sigma^n$, we define the \emph{length} of $w$ as $|w|=n$.
For $1\le i \le j \le n$, a word $u=w[i] \ldots w[j]$ is called a \emph{subword} of $w$.
By $w[i..j]$ we denote the occurrence of $u$
at position $i$, called a \emph{fragment} of~$w$.
A fragment with $i=1$ is called a \emph{prefix}
and a fragment with $j=n$ is called a \emph{suffix}.

A positive integer $p$ is called a period of $w$ whenever $w[i] = w[i+p]$ for all $i=1,2,\ldots,|w|-p$.
In this case, the prefix $w[1..p]$ is often also called a period of $w$.
The length of the shortest period of a word $w$ is denoted as $\per(w)$.
A word $w$ is called \emph{periodic} if $\per(w)\le \frac{|w|}{2}$
and \emph{highly periodic} if $\per(w)\le \frac{|w|}{3}$.
The well-known periodicity lemma~\cite{fine1965uniqueness} says
that if $p$ and $q$ are both periods of $w$, and $p+q\leq |w|$, then $\gcd(p,q)$ is also
a period of $w$.
We say that word $w$ is \emph{primitive} if $\per(w)$ is not a proper divisor of $|w|$. Note that the shortest period
$w[1..\per(w)]$ is always primitive.

\subsection{Fingerprints}
\label{sec:fingerprints}

Our randomized construction is based on Karp-Rabin fingerprints; see \cite{KR87}.
Fix a word $w[1..n]$
over an alphabet $\Sigma=\{0,\ldots,\sigma-1\}$, a constant $c \ge 1$, a prime number $p> \max(\sigma, n^{c+4})$, and
choose $x\in \integ_{p}$ uniformly at random. We define the fingerprint
of a subword $w[i..j]$ as $\Phi(w[i..j])=w[i] + w[i+1]x + \ldots +w[j]x^{j-i} \bmod{p}$.
With probability at least $1-\frac{1}{n^{c}}$, 
no two distinct subwords of the same length have equal fingerprints.
The situation when this happens for some two subwords
is called a \emph{false-positive}. From now on when stating the results we assume that there are no
false-positives to avoid repeating that the answers are correct with high probability. For dictionary matching,
we assume that no two distinct subwords of $w=TP_1\ldots P_s$ have equal fingerprints.
Fingerprints let us easily locate many patterns of the same length.
A straightforward solution described in the introduction builds a hash table mapping fingerprints to patterns.
However, then we can only guarantee that the hash table is constructed correctly with probability
$1-\Oh(\frac{1}{s^c})$ (for an arbitrary constant~$c$), and we would like to bound the error probability
by $\Oh(\frac{1}{(n+m)^c})$. Hence we replace hash table with a deterministic dictionary as explained
below.
Although it increases the time by $\Oh(s\log s)$, the extra term becomes absorbed
in the final complexities.

\begin{theorem}\label{thm:equal}
Given a text $T$ of length $n$ and patterns $P_1,\ldots,P_s$, each of length exactly $\ell$,
we can compute the
the leftmost occurrence of every pattern $P_i$ in $T$ using $\Oh(n+s\ell+s\log s)$ total time and $\Oh(s)$ space.
\end{theorem}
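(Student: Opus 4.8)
The plan is to follow the folklore sliding-window fingerprint approach, but replace the hash table by a deterministic dictionary so that the error probability stays inverse-polynomial in $n+m$. First I would compute, in $\Oh(s\ell)$ time and $\Oh(s)$ space, the fingerprint $\Phi(P_i)$ of each pattern by a single left-to-right scan of each $P_i$ (precomputing the powers $x^0,\ldots,x^{\ell}$ once, which costs $\Oh(\ell)$ time and $\Oh(\ell)\le\Oh(s\ell)$ space, or alternatively updating incrementally). Then I would build a static deterministic dictionary mapping each fingerprint value $\Phi(P_i)$ to the list of pattern indices sharing that fingerprint; using a standard deterministic dictionary construction this takes $\Oh(s\log s)$ time and $\Oh(s)$ space, and supports $\Oh(1)$-time lookups. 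Under the no-false-positives assumption, two patterns have equal fingerprints iff they are equal, so each dictionary bucket corresponds to one distinct pattern string (with a list of the indices realizing it).

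Next I would slide a window of length $\ell$ across $T$: maintain $\Phi(T[i..i+\ell-1])$ and update it to $\Phi(T[i+1..i+\ell])$ in $\Oh(1)$ time using the usual rolling-hash update (subtract $T[i]$, divide by $x$ — i.e. multiply by $x^{-1}\bmod p$, which is precomputed — add $T[i+\ell]x^{\ell-1}$, or the equivalent formulation). For each window position I look up the current fingerprint in the dictionary. On a hit, the window fragment equals the pattern string for that bucket (no false-positives), so I record this position as the leftmost occurrence for every index in the bucket's list and then delete the bucket from the dictionary so that no pattern is reported twice; this makes the total work charged to reporting $\Oh(s)$ overall. Scanning all $n-\ell+1$ window positions costs $\Oh(n)$ time. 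Patterns whose bucket is never hit have no occurrence in $T$, which we report accordingly. Summing up: $\Oh(s\ell)$ for fingerprints, $\Oh(s\log s)$ for the dictionary, $\Oh(n)$ for the scan, $\Oh(s)$ for reporting, giving $\Oh(n+s\ell+s\log s)$ time and $\Oh(s)$ space.

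The only genuinely delicate point is the space bound when $\ell$ is large: one cannot afford to store the powers $x^0,\dots,x^\ell$ if $\ell=\omega(s)$. I would handle this by observing that the window update needs only $x^{\ell-1}$ (or $x^{\ell}$) and $x^{-1}$, both computable in $\Oh(\log\ell)\subseteq\Oh(\ell)$ time with $\Oh(1)$ space by fast exponentiation, and that each $\Phi(P_i)$ can be computed by Horner's rule in $\Oh(\ell)$ time and $\Oh(1)$ extra space from the read-only access to $P_i$. A second minor point is to confirm that the deterministic dictionary of~\cite{HonKSTV13}-style (or any standard static dictionary) indeed meets the $\Oh(s\log s)$ build time and $\Oh(s)$ space with constant query time; this is the step whose citation/bookkeeping I would be most careful about, but it is standard. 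Everything else is a routine calculation, and correctness is immediate once the no-false-positives assumption stated in Section~\ref{sec:fingerprints} is invoked.
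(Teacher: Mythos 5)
Your proposal is correct and matches the paper's proof essentially verbatim: compute pattern fingerprints, build a deterministic dictionary (the paper cites Ru\v{z}i\'{c}'s construction~\cite{DBLP:conf/icalp/Ruzic08}, not~\cite{HonKSTV13}, which is the one place your bookkeeping slipped), slide a length-$\ell$ window over $T$ updating the fingerprint in $\Oh(1)$ via a precomputed $x^{-1}$, and record first hits. Your extra observations (handling $\ell\gg s$ by Horner's rule and fast exponentiation, and collapsing identical patterns into one bucket) are sound and match what the paper does implicitly.
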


\begin{proof}
We calculate the fingerprint $\Phi(P_{j})$ of every pattern. Then we build in $\Oh(s\log s)$ time~\cite{DBLP:conf/icalp/Ruzic08}
a deterministic dictionary $\mathcal{D}$ with an entry mapping $\Phi(P_{j})$ to $j$.
For multiple identical patterns we create just one entry, and at the end we copy the answers to all instances of the pattern.
Then we scan the text $T$ with a sliding window of length $\ell$ while maintaining the fingerprint $\Phi(T[i..i+\ell-1])$ of the
current window. Using~$\mathcal{D}$, we can find in $\Oh(1)$ time an index $j$ such that $\Phi(T[i..i+\ell-1])=\Phi(P_{j})$, if any, and update
the answer for $P_{j}$ if needed (i.e., if there was no occurrence of $P_j$ before).
If we precompute $x^{-1}$, the fingerprints $\Phi(T[i..i+\ell-1])$ can be updated in $\Oh(1)$ time while increasing~$i$.

\end{proof}

\subsection{Tries}
A trie of a collection of strings $P_1,\ldots,P_s$ is a rooted tree
whose nodes correspond to prefixes of the strings. The root represents
the empty word and the edges are labeled with single characters. 
The node corresponding to a particular prefix is called its \emph{locus}.
In a \emph{compacted} trie unary nodes that do not represent any $P_i$
are \emph{dissolved} and the labels of their incidents edges are concatenated. 
The dissolved nodes are called \emph{implicit} as opposed to the \emph{explicit}
nodes, which remain stored. The locus of a string in a compacted trie might therefore be explicit or implicit.
All edges outgoing from the same node are stored on a list sorted according to the first character,
which is unique among these edges. 
The labels of edges of a compacted trie are stored as pointers to the respective fragments of strings $P_i$.
Consequently, a compacted trie can be stored in space proportional to the number of explicit nodes, which is $\Oh(s)$.

\newcommand{\T}{\mathcal{T}}
Consider two compacted tries $\T_1$ and $\T_2$. 
We say that (possibly implicit) nodes $v_1\in \T_1$ and $v_2\in \T_2$ are \emph{twins}
if they are loci of the same string. Note that every $v_{1}\in \T_{1}$ has at most one twin
$v_{2}\in \T_{2}$.

\begin{lemma}\label{lem:merge}
Given two compacted tries $\T_1$ and $\T_2$ constructed for $s_1$ and $s_2$ strings, respectively,
in $\Oh(s_{1}+s_{2})$ total time and space
we can find for each explicit node $v_1 \in \T_{1}$ a node $v_2\in \T_2$
such that if $v_1$ has a twin in $\T_2$, then $v_2$ is its twin.
(If $v_1$ has no twin in $\T_2$, the algorithm returns an arbitrary node $v_2\in \T_2$).
\end{lemma}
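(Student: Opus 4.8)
The plan is to traverse both compacted tries simultaneously, matching up twins as we go, and recurse into subtrees only along shared branches. Concretely, I would maintain a pointer into $\T_1$ and a pointer into $\T_2$ that are supposed to be twins, and process $\T_1$ in a depth-first manner. At an explicit node $v_1\in\T_1$ whose current partner is a (possibly implicit) node $v_2\in\T_2$, I examine each outgoing edge of $v_1$. Because all edges out of a node are sorted by their unique first character, and the same holds for the edges of $\T_2$, I can, for each child edge of $v_1$ starting with character $a$, check in $\Oh(1)$ amortized time whether $v_2$ has an outgoing edge (or continuation, if $v_2$ is implicit) starting with $a$; if not, the child has no twin and I assign it (and its whole subtree) an arbitrary node of $\T_2$, say the root. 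If the characters match, I then need to walk down the two edge labels in lockstep.

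The key subtlety is that an edge of $\T_1$ and the corresponding edge of $\T_2$ may have different lengths, so one side becomes implicit while the other is still traversing an edge, or an explicit node is reached on one side in the middle of an edge on the other side. I would handle this by advancing simultaneously along both edge labels, comparing characters one position at a time, until I either (a) reach the explicit endpoint of the $\T_1$ edge — in which case I have found the partner of the child of $v_1$ and recurse — or (b) reach the explicit endpoint of the $\T_2$ edge before the $\T_1$ edge ends, in which case I continue from that explicit $\T_2$ node by finding its outgoing edge on the next character, or (c) discover a character mismatch, in which case the $\T_1$ child (and its subtree) has no twin. Edge labels are stored as pointers into the underlying strings $P_i$, so each character comparison is $\Oh(1)$. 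Since only explicit nodes of $\T_1$ need an answer, and each explicit edge of $\T_1$ is walked once, the character-by-character walking along $\T_1$ edges is $\Oh(s_1)$ in total; the extra steps taken on the $\T_2$ side are paid for because each explicit node of $\T_2$ is entered at most once along this coordinated walk, giving $\Oh(s_2)$ there.

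For correctness I would argue by induction on the depth that whenever the algorithm pairs $v_1$ with $v_2$, the two are loci of the same string, and that it pairs them whenever $v_1$ actually has a twin: the root of $\T_1$ is paired with the root of $\T_2$ (both loci of the empty word), and if $v_1$ has a twin then the twin lies on the path obtained by following the same character sequence in $\T_2$, which is exactly what the coordinated descent follows; conversely if at some point the next character of $\T_1$ cannot be matched in $\T_2$, then no extension of the current string is a node of $\T_2$, so none of the descendants of $v_1$ has a twin and returning an arbitrary node is permitted. Uniqueness of twins (noted before the lemma) guarantees there is never a choice to make. I should be slightly careful about implicit nodes: when $v_1$ is explicit but its partner $v_2$ is implicit (lies strictly inside a $\T_2$ edge), the "outgoing edges" of $v_2$ are just the single continuation of that edge, so the character test still makes sense.

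The main obstacle is the bookkeeping for unequal edge lengths and for implicit partners — making the simultaneous descent precise so that every explicit node of $\T_1$ receives an answer while the total work stays $\Oh(s_1+s_2)$. Once the invariant "explicit $\T_1$ edges are each traversed once, explicit $\T_2$ nodes are each visited once" is set up carefully, the time and space bounds follow, and the $\Oh(s_1+s_2)$ space is immediate since we only store a constant amount of state per level of the (at most $\Oh(s_1)$-deep, via an explicit stack) recursion plus the output of size $\Oh(s_1)$.
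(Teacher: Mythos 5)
Your overall plan---a coordinated depth-first traversal of the two compacted tries that pairs each explicit node of $\T_1$ with a candidate in $\T_2$ by following common branches matched on their first characters---is exactly the paper's plan, and your correctness reasoning (uniqueness of twins, descending only along shared characters, returning an arbitrary node when the $\T_1$ string cannot be followed in $\T_2$) is sound. However, there is a genuine gap in the time analysis. Your step that ``advances simultaneously along both edge labels, comparing characters one position at a time'' and detects a mismatch in case (c) is a character-by-character walk along the edge labels of $\T_1$. Its cost is proportional to the total length of those edge labels, which can be $\Theta(m)$ in the worst case (for instance, $s_1$ unrelated strings each of length $m/s_1$), not $\Oh(s_1)$. ``Each explicit edge of $\T_1$ is walked once'' gives you $\Oh(s_1)$ \emph{edges}, but not $\Oh(s_1)$ \emph{characters}.

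The idea you are missing, and the one that makes the paper's proof work, is that this comparison is not needed at all and the advancement past implicit positions can be done in $\Oh(1)$ per jump. Whenever both current nodes are implicit, each lies strictly inside a single edge and therefore has a unique one-character extension; you can compute, from the two edges' remaining lengths, how far to jump so that at least one side becomes explicit, and move there directly without reading any of the skipped characters. Correctness survives precisely because the lemma only asks the returned $v_2$ to be the twin \emph{when a twin exists}: if the two current nodes are twins and the skipped characters agree, the unique extension below $v_2$ spells out the string of the next explicit node of $\T_1$, so the jump lands on its twin; and if some skipped character disagrees, then that $\T_1$ string has no locus anywhere in $\T_2$ (any occurrence would have to extend $v_2$, and implicit nodes and first characters force a unique continuation), so returning whatever node the traversal produces is permitted by the statement. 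With jumps replacing comparisons, each explicit node of either trie is visited $\Oh(1)$ times, children are matched by a single linear scan of the sorted, unique-first-character edge lists, and the $\Oh(s_1+s_2)$ time bound follows. Your mismatch-detection case (c) is therefore both unnecessary and the exact cause of the overrun.
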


\begin{proof}
We recursively traverse both tries while maintaining a pair of nodes $v_1\in \T_1$ and $v_2\in \T_2$,
starting with the root of $\T_1$ and $\T_2$ satisfying the following invariant: either $v_{1}$ and $v_{2}$ are
twins, or $v_{1}$ has no twin in $\T_{2}$. If $v_{1}$ is explicit, we store $v_{2}$ as the
candidate for its twin.
Next, we list the (possibly implicit) children of $v_{1}$ and $v_{2}$ and 
match them according to the edge labels with a linear scan. We recurse on all pairs of matched children.
If both $v_1$ and $v_2$ are implicit, we simply advance to their immediate children.
The last step is repeated until we reach an explicit node in at least one of the tries,
so we keep it implicit in the implementation to make sure that the total number of operations is $\Oh(s_1+s_2)$.
If a node $v\in \T_{1}$ is not visited during the traversal, for sure it has no twin in $\T_{2}$.
Otherwise, we compute a single candidate for its twin.
\end{proof}

\section{Short Patterns}
\label{sec:short}

To handle the patterns of length not exceeding a given threshold $\ell$, we first build a compacted trie for those patterns. 
Construction is easy if the patterns are sorted lexicographically: we insert them one by one into the compacted trie first naively traversing the trie from the root, then potentially partitioning one edge into two parts, and finally adding a leaf if necessary. 
Thus, the following result suffices to efficiently build the tries.

\begin{lemma}\label{lem:sort}
One can lexicographically sort strings $P_1,\ldots,P_s$ of total length $m$ 
in $\Oh(m+\sigma^\eps)$ time using $\Oh(s)$ space, for any constant $\eps>0$.
\end{lemma}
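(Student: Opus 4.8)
The plan is to view $P_1,\dots,P_s$ as the leaves of a (conceptual) trie and to produce their lexicographic order by ordering, at every branching node, its outgoing edges by first character and then reading off the leaves left to right. I would split the work into three phases: (i) discover the branching structure of the trie; (ii) sort, for all branching nodes at once, the first characters of their outgoing edges; (iii) assemble the answer.

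For phase (i) I would descend the trie top-down, maintaining the current node as a contiguous block of string indices sharing a common prefix, together with the length of that prefix. While the block is \emph{uniform} --- every string in it has the same next character, or has ended --- I extend the prefix by one character (a linear scan over the block). When the block stops being uniform it is a branching node, and I split it into child blocks according to the next character; here it is enough to \emph{group} equal characters, not to order them. Each string is touched by a scan only at positions where it still shares a prefix with another string, so at most $|P_i|$ positions, and all scans together cost $\Oh(m)$; with the usual trick of iterating on the largest child and recursing on the rest, the whole descent can be made to use $\Oh(s)$ space.

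In phase (ii) I would use the fact that a tree with $s$ leaves in which every internal node has at least two children has only $\Oh(s)$ edges: collect from every branching node the pairs (first character, node identifier) --- $\Oh(s)$ of them --- and sort them all together. Sorting $\Oh(s)$ integers from $[0,\sigma)$ can be done in $\Oh(s+\sigma^{\eps})$ time and $\Oh(s)$ space: if $\sigma\le s^{1/\eps}$, a radix sort with radix $\le s$ and $\Oh(1/\eps)$ rounds does it; otherwise $\sigma^{\eps}$ already dominates $s\log s$ (up to adjusting $\eps$ by a constant factor), so a comparison sort suffices. Phase (iii) then re-traverses the trie, emitting the children of each branching node in the order just computed and copying answers to repeated strings; this is $\Oh(s)$. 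To build the compacted tries afterwards it is enough to insert the $P_i$ in this order, as already noted before the lemma.

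The delicate point --- and the one I expect to be the main obstacle --- is the grouping of equal characters inside phase (i). Doing it by sorting the $g$ characters of a branching block would add a $\sigma^{\eps}$ term \emph{per} branching node, i.e.\ $\Oh(s\,\sigma^{\eps})$ in total, which is too much. One clean way out, consistent with the paper already being Monte Carlo, is to group by a hash table built on the $\le g$ values in $\Oh(g)$ expected time; a deterministic alternative is to charge the $\sigma^{\eps}$ terms to a single global budget, exploiting again that the total number of distinct branch characters is only $\Oh(s)$ (so one can reduce the alphabet on just those, e.g.\ with the deterministic dictionary of~\cite{DBLP:conf/icalp/Ruzic08}) and to defer \emph{all} ordering to phase (ii). Making the $\Oh(m+\sigma^{\eps})$ time and $\Oh(s)$ space bounds come out exactly, uniformly over all relations among $s$, $m$ and $\sigma$, is where the care is needed.
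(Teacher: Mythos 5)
Your route---build the trie top--down and order the outgoing edges at the branching nodes---is genuinely different from the paper's, which builds no trie at all. The paper separates the $\sqrt{m}+\sigma^{\eps/2}$ longest strings from the rest and sorts each part differently: the long ones by insertion sort enhanced with longest-common-prefix bookkeeping between adjacent entries (so inserting $P_j$ costs $\Oh(j+|P_j|)$, totalling $\Oh(m+(\sqrt m+\sigma^{\eps/2})^2)=\Oh(m+\sigma^\eps)$); the remaining strings each have length at most $\sqrt m$ and, crucially, exist only when $s\ge\sigma^{\eps/2}$, so an iterated radix sort with radix $\sigma^{\eps/2}$ stays within $\Oh(s)$ space. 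The two lists are then merged. This dichotomy is precisely what lets the paper avoid ever having to partition a group by an arbitrary $\Sigma$-character in sublogarithmic amortized time.

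The difficulty you flag is a genuine gap, not just ``care needed.'' The hash-table version makes grouping succeed only in expectation, or with probability $1-1/\mathrm{poly}(s)$; the paper explicitly refuses hash tables for exactly this reason (see the discussion preceding Thm.~\ref{thm:equal}), since it insists on error $1/(n+m)^{c}$ rather than $1/s^{c}$ and therefore uses deterministic dictionaries instead. Your deterministic alternative is circular: the $\Oh(s)$ branch characters on which you want to build the dictionary of~\cite{DBLP:conf/icalp/Ruzic08} are the first characters of the trie's edges, which only become known \emph{after} the branching structure has been discovered, and discovering that structure is exactly what the grouping is for. Without the dictionary, comparison-sorting the $g$ next-characters at a branching node costs $\Oh(g\log g)$; summed over branching nodes this is already $\Theta(s\log s)$ when all $s$ strings are distinct single characters (so $m=s$, $\sigma=s$), which exceeds $\Oh(m+\sigma^\eps)=\Oh(s+s^\eps)$ for $\eps<1$---and small $\eps$ is the regime in which the lemma is applied. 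As written, the proposal does not establish the stated bound.
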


\begin{proof}
We separately sort the $\sqrt{m}+\sigma^{\eps/2}$ longest strings and all the remaining strings, and then merge both sorted
lists. Note these longest strings can be found in $\Oh(s)$ time using a linear time selection algorithm.

Long strings are sorted using insertion sort. If the longest common prefixes between adjacent (in the sorted order) strings are 
computed and stored, inserting $P_j$ can be done in $\Oh(j+|P_j|)$ time. In more detail, let $S_1, S_2, \ldots, S_{j-1}$
be the sorted list of already processed strings. We start with $k:=1$ and keep increasing $k$ by one as long as $S_k$
is lexicographically smaller than $P_j$ while maintaining the longest common prefix between $S_k$ and $P_j$, denoted
$\ell$. After increasing $k$ by one, we update $\ell$ using the longest common prefix between $S_{k-1}$ and $S_k$,
denoted $\ell'$, as follows. If $\ell' > \ell$, we keep $\ell$ unchanged. If $\ell'=\ell$, we try to iteratively
increase $\ell$ by one as long as possible. In both cases, the new value of $\ell$ allows us to lexicographically
compare $S_k$ and $P_j$ in constant time.  Finally, $\ell' < \ell$ guarantees that $P_j<S_k$ and we may terminate the procedure.
Sorting the $\sqrt{m}+\sigma^{\eps/2}$ longest strings using this approach takes
$\Oh(m+(\sqrt{m}+\sigma^{\eps/2})^2)=\Oh(m+\sigma^{\eps})$ time.

The remaining strings are of length at most $\sqrt{m}$ each, and if there are any, then
$s\ge \sigma^{\eps/2}$. We sort these strings by iteratively applying radix sort, treating each symbol from
$\Sigma$ as a sequence of $\frac{2}{\eps}$ symbols from $\{0,1,\ldots,\sigma^{\eps/2}-1\}$.
Then a single radix sort takes time and space proportional to the number of strings involved plus the alphabet size,
which is $\Oh(s+\sigma^{\eps/2})=\Oh(s)$. Furthermore, because the numbers of strings involved in the subsequent
radix sorts sum up to $m$, the total time complexity is $\Oh(m+\sigma^{\eps/2}\sqrt{m})=\Oh(m+\sigma^{\eps})$.

Finally, the merging takes time linear in the sum of the lengths of all the involved strings, so the total complexity
is as claimed.
\end{proof}

Next, we partition $T$ into $\Oh(\frac{n}{\ell})$ overlapping blocks $T_1=T [1..2 \ell]$, $T_2= T [\ell+ 1..3 \ell]$, $T_3=T [2 \ell + 1..4 \ell], \ldots$.
 Notice that each subword of length at most $\ell$ is completely contained in some block.
Thus, we can consider every block separately.

The suffix tree of each block $T_i$ takes $\Oh(\ell\log \ell)$
time~\cite{Ukkonen95} and $\Oh(\ell)$ space to construct
and store (the suffix tree is discarded after processing the block).
We apply Lemma~\ref{lem:merge} to the suffix tree and the compacted trie of patterns; this takes $\Oh(\ell+s)$ time.
For each pattern $P_j$ we obtain a node such that the corresponding subword is equal to $P_j$
provided that $P_j$ occurs in $T_i$.
We compute the leftmost occurrence $T_i[b..e]$ of the subword, which takes constant time
if we store additional data at every explicit node of the suffix tree,
and then we check whether $T_i[b..e]=P_j$ using fingerprints.
For this, we precompute the fingerprints of all patterns, and for each block $T_i$ we precompute the fingerprints
of its prefixes in $\Oh(\ell)$ time and space, which allows to determine the fingerprint of any of its subwords in constant time.

In total, we spend $\Oh(m+\sigma^{\eps})$ for preprocessing and $\Oh(\ell\log \ell+s)$ for each block. 
Since $\sigma=(n+m)^{\Oh(1)}$, for small enough $\eps$ this yields the
following result.

\begin{theorem}\label{thm:short}
Given a text $T$ of length $n$ and patterns $P_1,\ldots,P_s$ of total length $m$,
using $\Oh(n\log\ell+s\frac{n}{\ell}+m)$ total time and $\Oh(s+\ell)$ space
we can compute
the leftmost occurrences in $T$ of every pattern $P_j$ of length at most $\ell$.
\end{theorem}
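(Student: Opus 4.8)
The plan is to combine the tools already assembled in this section: lexicographic sorting of the short patterns (Lemma~\ref{lem:sort}), the trie-merging routine (Lemma~\ref{lem:merge}), and linear-time suffix-tree construction together with fingerprint verification. First I would restrict attention to the patterns $P_j$ with $|P_j|\le\ell$; call this sub-collection $\mathcal{P}'$, and note its total length is at most $m$ and its cardinality at most $s$. Using Lemma~\ref{lem:sort} I sort $\mathcal{P}'$ lexicographically in $\Oh(m+\sigma^\eps)$ time and $\Oh(s)$ space, and then build the compacted trie of $\mathcal{P}'$ by inserting the sorted strings one at a time, each insertion being a naive descent, a possible edge split, and a possible leaf addition; this costs $\Oh(m)$ time and $\Oh(s)$ space overall, and every explicit node carries a pointer to the patterns ending there.

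Next I would partition $T$ into the $\Oh(n/\ell)$ overlapping blocks $T_1=T[1..2\ell]$, $T_2=T[\ell+1..3\ell]$, $\ldots$, each of length $2\ell$, observing that any subword of $T$ of length at most $\ell$ lies entirely inside some block, so it suffices to detect, for each block, which patterns of $\mathcal{P}'$ occur in it and to track the leftmost such occurrence globally. For each block $T_i$ I build its suffix tree in $\Oh(\ell\log\ell)$ time and $\Oh(\ell)$ space (discarding it afterwards, so the $\ell$-sized workspace is reused across blocks), and I apply Lemma~\ref{lem:merge} with $\T_1$ the pattern trie and $\T_2$ the suffix tree, in $\Oh(s+\ell)$ time. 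For each explicit node of the pattern trie — in particular for each leaf, i.e.\ each pattern $P_j$ — the lemma hands back a candidate node of the suffix tree that is guaranteed to be the twin whenever $P_j$ actually occurs in $T_i$. From that suffix-tree node I read off the leftmost starting position $b$ of the corresponding subword inside $T_i$ (precomputed bottom-up in the suffix tree in $\Oh(\ell)$ time), giving a candidate occurrence $T_i[b..b+|P_j|-1]$; I then verify it equals $P_j$ by comparing Karp--Rabin fingerprints, using the precomputed fingerprint of $P_j$ and the fingerprint of the block subword obtained in $\Oh(1)$ time from precomputed prefix fingerprints of $T_i$. If the check passes and this position, translated into a global index in $T$, beats the best occurrence of $P_j$ recorded so far, I update it.

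Finally I would tally the costs. Preprocessing — sorting and trie construction, computing all pattern fingerprints — is $\Oh(m+\sigma^\eps)$ time and $\Oh(s)$ space. Per block the work is $\Oh(\ell\log\ell)$ for the suffix tree plus $\Oh(s+\ell)$ for merging and scanning the pattern nodes, and there are $\Oh(n/\ell)$ blocks, so the block phase costs $\Oh(\tfrac{n}{\ell}(\ell\log\ell+s)) = \Oh(n\log\ell + s\tfrac{n}{\ell})$ time. Adding the preprocessing and noting $\sigma=(n+m)^{\Oh(1)}$, so that $\sigma^\eps$ is absorbed into $\Oh(m)$ for a sufficiently small constant $\eps$, the total is $\Oh(n\log\ell + s\tfrac{n}{\ell} + m)$ time; space is $\Oh(s)$ for the trie, answers, and fingerprints plus $\Oh(\ell)$ for the current block and its suffix tree, i.e.\ $\Oh(s+\ell)$. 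I expect the only delicate point to be confirming that Lemma~\ref{lem:merge}'s guarantee is exactly what the fingerprint check needs: the lemma may return a spurious node when $P_j$ does not occur in $T_i$, but the fingerprint comparison (under the no-false-positives assumption) rejects precisely those cases, while a genuine occurrence always yields the true twin and hence the correct leftmost position within the block — so correctness follows, and the leftmost occurrence over all of $T$ is obtained by taking the minimum over the blocks that contain~$P_j$.
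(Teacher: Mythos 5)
Your proposal matches the paper's proof essentially step for step: lexicographic sorting via Lemma~\ref{lem:sort}, compacted-trie construction, partitioning $T$ into overlapping length-$2\ell$ blocks, building each block's suffix tree, merging via Lemma~\ref{lem:merge}, recovering the leftmost in-block position from precomputed suffix-tree annotations, and verifying with Karp--Rabin fingerprints, with the same cost accounting including the absorption of $\sigma^\eps$ into $\Oh(m)$. The only cosmetic slip is identifying ``each pattern'' with ``each leaf'' of the trie --- a pattern that is a prefix of another is an internal explicit node --- but since you process all explicit nodes this does not affect correctness.
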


\section{Long Patterns}
\label{sec:long}

To handle patterns longer than a certain threshold, we first distribute them into groups according to the value of $\floor{\log_{4/3}|P_j|}$.
Patterns longer than the text can be ignored, so there are $\Oh(\log n)$ groups. 
Each group is handled separately, and from now on we consider only patterns $P_j$ satisfying $\floor{\log_{4/3}|P_j|}=i$.

We classify the patterns into classes depending on the periodicity of their prefixes and suffixes.  
We set $\ell = \ceil{(4 / 3)^{i}}$ and define $\alpha_j$ and $\beta_j$ as, respectively, the prefix and the suffix of length $\ell$ of $P_j$.  
Since $\frac{2}{3}(|\alpha_j|+|\beta_j|)=\frac{4}{3}\ell \ge |P_j|$, the following fact yields a classification of the patterns into three classes: either $P_j$ is highly periodic, or $\alpha_j$ is not highly periodic, or $\beta_j$ is not highly periodic.
The intuition behind this classification is that
if the prefix or the suffix is not repetitive, then we will not see it many times in a short subword of the text.
On the other hand, if both the prefix and suffix are repetitive, then there is some structure that we can take advantage of. 

\begin{fact}\label{fct:shr}
Suppose $x$ and $y$ are a prefix and a suffix of a word $w$, respectively.
If  $|x|+|y|\ge |w|+p$ and $p$ is a period of both $x$ and $y$, 
then $p$ is a period of~$w$.
\end{fact}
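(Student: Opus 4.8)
The plan is to prove the contrapositive-style combinatorial statement directly, by extending the periodicity relation from the overlapping prefix and suffix to the whole of $w$. Write $n = |w|$, $a = |x|$, $b = |y|$, so that $x = w[1..a]$, $y = w[n-b+1..n]$, and the hypothesis gives $a + b \ge n + p$. First I would observe that this inequality means $x$ and $y$ overlap in a block of length at least $p$: concretely, the positions covered by $x$ are $\{1,\ldots,a\}$ and those covered by $y$ are $\{n-b+1,\ldots,n\}$, and their intersection $\{n-b+1,\ldots,a\}$ has size $a-(n-b+1)+1 = a+b-n \ge p$.

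Next, the goal is to show $w[i] = w[i+p]$ for every $i \in \{1,\ldots,n-p\}$. Fix such an $i$. If $i+p \le a$, then both $i$ and $i+p$ lie in the range $\{1,\ldots,a\}$, so the equality $w[i]=w[i+p]$ follows immediately from $p$ being a period of $x$. Symmetrically, if $i \ge n-b+1$, then both $i$ and $i+p$ lie in the range of $y$ (note $i+p \le n$ automatically since $i \le n-p$), and the equality follows from $p$ being a period of $y$. The only remaining case is $i \le n-b$ and $i+p > a$, i.e.\ $i < n-b+1$ and $i > a-p$; I claim this case is empty. Indeed $i > a-p$ combined with $a + b - n \ge p$, i.e.\ $a - p \ge n - b$, gives $i > n-b$, contradicting $i \le n-b$. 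Hence every $i$ falls into one of the first two cases and $p$ is a period of $w$.

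I do not expect any genuine obstacle here: the statement is an elementary consequence of interval arithmetic on index ranges, and the only thing to be careful about is the bookkeeping of endpoints (whether inequalities are strict, and that $i+p$ never exceeds $n$ when $i \le n-p$). One can alternatively phrase the same argument via the periodicity lemma by noting that the overlap of $x$ and $y$ has length $\ge p$ and carries period $p$ from both sides, but since here $p$ is literally the \emph{same} integer on both ends, no gcd argument is needed — the direct case analysis above is the cleanest route.
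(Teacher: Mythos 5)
Your proof is correct and follows essentially the same route as the paper's: fix $i \in \{1,\ldots,|w|-p\}$, use the period of $x$ when $i+p \le |x|$, use the period of $y$ when $i \ge |w|-|y|+1$, and observe that $|x|+|y| \ge |w|+p$ guarantees these two cases exhaust all $i$. The extra remarks about overlap length and the explicit contradiction in the "remaining case" are just more spelled-out versions of the paper's final sentence.
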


\begin{proof}
We need to prove that $w[i]=w[i+p]$ for all $i=1,2,\ldots,|w|-p$.
If $i+p\leq |x|$ this follows from $p$ being a period of $x$,
and if $i\geq |w|-|y|+1$ from $p$ being a period of $y$.
Because $|x|+|y| \ge |w|+p$, these two cases cover all possible
values of $i$.
\end{proof}

To assign every pattern to the appropriate class, we compute the periods of $P_j$, $\alpha_j$ and $\beta_j$
using small space. Roughly the same result has been proved in \cite{DBLP:conf/esa/KociumakaSV14},
but for completeness we provide the full proof here.

\begin{lemma}
\label{lem:comper}
Given a read-only string $w$ one can decide in $\Oh(|w|)$
time and constant space if $w$ is periodic and if so, compute $\per(w)$.
\end{lemma}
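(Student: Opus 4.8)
The plan is to compute the exact shortest period $\per(w)$ in $\Oh(|w|)$ time and $\Oh(1)$ space, and then answer the query by a single comparison: report that $w$ is periodic together with the value $\per(w)$ if $\per(w)\le\frac{|w|}{2}$, and report that $w$ is aperiodic otherwise. So the whole task reduces to a constant-space computation of $\per(w)$, which I would carry out through the classical link between periods and maximal suffixes that underlies constant-space pattern matching.

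Write $n=|w|$. First I would run Duval's maximal-suffix algorithm twice: once for the alphabet order $\le$, and once for the order obtained by reversing the alphabet, each time obtaining the corresponding maximal suffix $v$ of $w$ together with its shortest period $\per(v)$. Each run takes $\Oh(n)$ time and maintains only a constant number of indices into the read-only string, hence $\Oh(1)$ extra space. By the Critical Factorization Theorem there is a critical factorization $w=uv$ with $|u|<\per(w)$, and by a result of Crochemore and Perrin such a factorization is obtained from one of the two maximal suffixes just computed; I would keep that factorization. In particular, if $w$ turns out to be periodic, then automatically $|u|<\per(w)\le\frac n2$, so $|v|>\frac n2$.

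Now set $p:=\per(v)$. The one explicit step I would then perform is to test, by a single left-to-right scan, whether $p$ is a period of the whole word, i.e.\ whether $w[i]=w[i+p]$ for all $1\le i\le n-p$; this costs $\Oh(n)$ time and $\Oh(1)$ space. If the test succeeds, then $\per(w)=p$: indeed $p$ is a period, and any strictly smaller period $q$ of $w$ would, since $q<p\le |v|$, also be a period of the suffix $v$, contradicting $p=\per(v)$; so I report periodicity with period $p$ if $p\le\frac n2$ and aperiodicity if $p>\frac n2$. If the test fails, then $p$ is not a period of $w$, and the structural fact I rely on is that in this case $\per(w)>|v|$, hence $\per(w)>\frac n2$ (otherwise criticality would give $|v|>\frac n2$ and $\per(w)>|v|>\frac n2$, a contradiction), so $w$ is aperiodic. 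Since every answer is ultimately backed by the explicit verification scan, the procedure is unconditionally correct once this structural fact is established.

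The main obstacle is exactly this structural fact — that if $\per(v)$ fails to be a period of $w$ then $\per(w)$ exceeds $|v|$ — which is precisely the ingredient making the Crochemore--Perrin two-way matcher correct, and whose proof genuinely uses the maximal-suffix property of $v$ rather than just the periodicity lemma. The delicate sub-case is when $\per(v)$ and $\per(w)$ are both close to $\frac n2$, so that $|v|$ may be only slightly larger than $\per(w)$ and the periodicity lemma cannot be applied to the two periods of $v$; here one must invoke the internal structure of a maximal suffix (its shortest period being ``primitive and maximal''). A secondary, routine point worth spelling out, given that the lemma explicitly claims constant space, is that Duval's maximal-suffix computation and the period extraction really do run with only $\Oh(1)$ words of workspace on a read-only input.
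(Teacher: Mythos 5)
Your plan is correct in outline but takes a genuinely different route from the paper, and it leaves its central ingredient unproved.

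The paper's own argument is deliberately more elementary: it takes $v$ to be the \emph{prefix} of $w$ of length $\lceil |w|/2\rceil$, uses any linear-time constant-space single-pattern matcher as a black box to find the starting position $p$ of the second occurrence of $v$, and then verifies by a scan whether $p-1$ is a period of $w$. The correctness of the ``not periodic'' answer is a two-line consequence of the periodicity lemma and the primitivity of the shortest period: if $\per(w)\le|w|/2$, then $v$ occurs at position $\per(w)+1$, so $p-1\le\per(w)$; both $p-1$ and $\per(w)$ are periods of $w[1..|v|+p-1]$, and since $\per(w)\le|v|$ the hypothesis $(p-1)+\per(w)\le|v|+p-1$ of the periodicity lemma holds, forcing $\gcd(p-1,\per(w))$ to be a period of $w[1..\per(w)]$ and hence $p-1=\per(w)$. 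Because $v$ is chosen as a half-length \emph{prefix}, this inequality always holds, and nothing beyond the periodicity lemma is needed.

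Your route instead runs the Crochemore--Perrin preprocessing: compute the critical factorization $w=uv$ from the two maximal suffixes, compute $q=\per(v)$, verify whether $q$ is a period of $w$. The verification step and the ``if the test succeeds'' reasoning ($q$ is a period of $w$, and any smaller period of $w$ would be a smaller period of its factor $v$, so $\per(w)=q$) are fine. The gap is in the ``if the test fails'' branch, and you identify it yourself: you need that if $\per(v)$ is not a period of $w$ then $\per(w)>|v|$. This is precisely the nontrivial part of the Crochemore--Perrin preprocessing correctness; when $\per(v)+\per(w)>|v|$ the periodicity lemma applied to the two periods of $v$ gives nothing, and one must invoke the self-maximality of $v[1..\per(v)]$. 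You flag it as ``the main obstacle'' but do not prove it, so as written the proposal rests on an unestablished lemma. Since the stated goal of the paper's proof (``for completeness'') is to be self-contained, this is a real hole; and the comparison shows what the paper's choice of half-length prefix buys, namely that the needed implication reduces to the periodicity lemma with hypotheses that are automatically satisfied, avoiding the maximal-suffix machinery altogether. Both routes rely on a constant-space string-matching primitive of comparable sophistication (the paper to locate an occurrence of $v$, you to compute maximal suffixes via Duval), so neither wins on resources; the paper's route wins on the brevity and elementarity of the correctness argument.
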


\begin{proof}
Let $v$ be the prefix of $w$ of length $\ceil{\frac12|w|}$ and $p$ be the starting position of the second occurrence of $v$ in $w$, if any.
We claim that if $\per(w) \leq \frac12|w|$, then $\per(w)=p-1$. Observe first that in this case $v$ occurs at a position $\per(w)+1$.
Hence, $\per(w)\ge p-1$. Moreover $p-1$ is a period of $w[1..|v|+p-1]$ along with $\per(w)$.
By the periodicity lemma, $\per(w)\le \frac12|w|\le |v|$ implies that $\gcd(p-1,\per(w))$ is also a period of that prefix. Thus $\per(w)>p-1$
would contradict the primitivity of $w[1..\per(w)]$.

The algorithm computes the position $p$ using a linear time constant-space pattern matching algorithm. If it exists, it uses letter-by-letter comparison to determine whether $w[1..p-1]$ is a period of $w$. If so, by the discussion above $\per(w)=p-1$ and the algorithm returns this value. 
Otherwise, $2\per(w)> |w|$, i.e., $w$ is not periodic.
The algorithm runs in linear time and uses constant space.

\end{proof}

\subsection{Patterns without Long Highly Periodic Prefix}
\label{sec:aperiodic}
Below we show how to deal with patterns with non-highly periodic prefixes~$\alpha_{j}$.
Patterns with non-highly periodic suffixes $\beta_{j}$ can be processed using the same method after reversing the text
and the patterns. 

\begin{lemma}\label{lem:nhp}
Let $\ell$ be an arbitrary integer.
Suppose we are given a text $T$ of length $n$ and patterns $P_1,\ldots,P_s$ such that for $1 \le j \le s$ we have $\ell \le |P_j| < \frac43 \ell$ and $\alpha_j = P_j[1..\ell]$ is not highly periodic.
We can compute the leftmost and the rightmost occurrence of each pattern $P_j$ in $T$ using $\Oh(n+s(1+\frac{n}{\ell})\log s+s\ell)$ time and $\Oh(s)$ space.
\end{lemma}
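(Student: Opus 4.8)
The plan is to anchor every pattern $P_j$ at its length-$\ell$ prefix $\alpha_j$, exploiting that the occurrences of $\alpha_j$ in $T$ are sparse precisely because $\alpha_j$ is not highly periodic. Indeed, if $\alpha_j$ starts at two positions $q<q'$ with $q'-q<\ell$, then comparing the two copies on their overlap shows that $q'-q$ is a period of $\alpha_j$, hence $q'-q\ge \per(\alpha_j)>\tfrac{\ell}{3}$. So consecutive starting positions of $\alpha_j$ in $T$ are always more than $\tfrac{\ell}{3}$ apart, and $\alpha_j$ has only $\Oh(n/\ell)$ occurrences in $T$. Summing over the $s$ patterns, the number of pairs (distinct prefix, occurrence of that prefix in $T$) is $\Oh(s\cdot n/\ell)$, and this is the quantity for which we can afford to spend $\Oh(\log s)$ per pair.

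Concretely, I would first compute $\Phi(\alpha_j)$ and $\Phi(P_j)$ for every $j$ (total time $\Oh(s\ell)$, space $\Oh(s)$) and group the patterns by the value of $\Phi(\alpha_j)$; assuming no false positives this is exactly grouping by the prefix $\alpha_j$. As in Theorem~\ref{thm:equal} I build a deterministic dictionary $\mathcal D$ mapping each such fingerprint to its group, in $\Oh(s\log s)$ time. For each group $G$ (all of whose patterns share one prefix $\alpha$) I build, using Lemma~\ref{lem:sort} followed by incremental insertion, the compacted trie of the patterns of $G$; I annotate every edge with the Karp--Rabin fingerprint of its label and give every node $\Oh(1)$ auxiliary data so that the (explicit) locus of each $P_j$ can update its stored leftmost and rightmost occurrence in $\Oh(1)$ time. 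This costs $\Oh(s\ell+s\log s)$ time and $\Oh(s)$ space. Then I slide a window of length $\ell$ across $T$, maintaining $\Phi(T[i..i+\ell-1])$ together with a small amount of auxiliary fingerprint information so that the fingerprint of any fragment of length $<\ell$ beginning at or just after the current window can be retrieved quickly; each of the $\Oh(n)$ window positions costs an $\Oh(1)$ lookup in $\mathcal D$, and the positions that hit are exactly the occurrences in $T$ of the prefixes, $\Oh(s\,n/\ell)$ in total.

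To process a hit at position $q$ against its group $G$ (so $T[q..q+\ell-1]=\alpha$), I descend the trie of $G$ from the root guided by $T[q..]$: at an explicit node I binary-search on the first characters of its outgoing edges to find the candidate child in $\Oh(\log s)$ time and then use a single fingerprint comparison to check in $\Oh(1)$ time whether the whole (compacted) edge label matches the text, stopping as soon as the text diverges or a pattern end is reached; whenever the descent passes the locus of a pattern $P_j$, I set $P_j$'s leftmost occurrence (if still unset) and overwrite its rightmost occurrence, so a single left-to-right scan yields both. Each descent visits at most as many explicit nodes as the trie of $G$ is deep in explicit nodes, namely $\Oh(|G|)$, hence costs $\Oh(|G|\log s)$; a descent is performed once per occurrence of $\alpha$, so one group costs $\Oh((1+n/\ell)\,|G|\log s)$, and summing over groups gives $\Oh(s(1+n/\ell)\log s)$. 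Adding $\Oh(n)$ for the scan and $\Oh(s\ell)$ for reading the patterns and computing their fingerprints yields the stated time, within $\Oh(s)$ space.

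The conceptually essential step is the periodicity argument bounding the number of occurrences of a non-highly-periodic $\alpha_j$ by $\Oh(n/\ell)$ — that is what turns the otherwise hopeless "re-walk the trie at every prefix occurrence'' strategy into something affordable. The step I expect to be most delicate in the implementation is the fingerprint bookkeeping during the scan: since $\ell$ can be much larger than $s$, we cannot buffer $\Theta(\ell)$ prefix fingerprints of $T$, so we must maintain only $\Oh(1)$ (or, if necessary, $\Oh(\log\ell)$) additional sliding fingerprints and still recover in $\Oh(1)$ (resp.\ $\Oh(\log\ell)$) time the fingerprint of any text fragment a trie edge must be tested against. One must also be careful to charge each pattern only $\Oh(1)$ per descent passing through its locus and only $\Oh(\log s)$ per \emph{explicit} node visited — jumping over compacted edges with a single fingerprint test — so that the running time does not degrade to something proportional to the total length of the matched edge labels; here it is convenient that in a compacted trie pattern loci are always explicit, so every matched pattern is encountered at an explicit node.
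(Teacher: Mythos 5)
Your combinatorial starting point is the same as the paper's: the non-high-periodicity of $\alpha_j$ forces consecutive occurrences of $\alpha_j$ to be more than $\ell/3$ apart, so $\alpha_j$ occurs $\Oh(1+n/\ell)$ times and we may spend $\Oh(\log s)$ per (prefix, occurrence) pair. The preprocessing (compute $\Phi(\alpha_j)$, build a deterministic dictionary keyed by $\Phi(\alpha_j)$) also matches. The divergence — and the gap — is in what happens after a prefix match $\alpha_j=T[q..q+\ell-1]$ is detected.

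The paper exploits that $|\alpha_j|+|\beta_j|=2\ell>|P_j|$, so a \emph{single} further comparison, between the precomputed $\Phi(\beta_j)$ and the sliding-window fingerprint at position $q+|P_j|-\ell$, certifies an occurrence of $P_j$. That comparison is \emph{deferred}: a request $(q+|P_j|-\ell,j)$ is pushed into a priority queue and popped when the window naturally arrives, so the only text fingerprint the algorithm ever needs is the one the scan already maintains. Your trie descent instead wants, at the moment the window sits at $q$, the fingerprints of fragments $T[q+\ell..q+\ell+k]$ that lie strictly to the \emph{right} of the window, with offsets $k$ discovered online as the descent proceeds. You flag this as the delicate step and assert that $\Oh(1)$ (or $\Oh(\log\ell)$) auxiliary sliding fingerprints will let you answer these queries in $\Oh(1)$ (or $\Oh(\log\ell)$) time, but you never say how, and within the lemma's $\Oh(s)$ space budget it does not seem possible when $\ell\gg s$: you cannot buffer the $\Theta(\ell)$ prefix fingerprints that would make arbitrary substring queries $\Oh(1)$, and rereading the text to recompute each edge's fingerprint on the fly costs up to $\Theta(\ell)$ per descent, i.e.\ up to $\Theta(sn)$ overall when the $\alpha_j$ are pairwise distinct — far beyond the claimed bound. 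That is a genuine gap. It could be patched by \emph{spreading each descent over time} (store one pending trie pointer plus the single fixed value $\Phi(T[1..q+\ell-1])$ per group, advance one edge whenever the window reaches the corresponding boundary), but then you have essentially reinvented the paper's request queue with extra machinery; and the compacted trie itself buys nothing asymptotically, since $k$ patterns sharing a prefix cost $k$ per-pattern requests in the paper's scheme and a depth-$\Oh(k)$ descent in yours — the same $\Oh(k\log s)$ either way.
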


The algorithm scans the text $T$ with a sliding window of length $\ell$.
Whenever it encounters a subword equal to the prefix $\alpha_j$ of some $P_j$,
it creates a \emph{request} to verify whether the corresponding suffix $\beta_j$ of length $\ell$
occurs at the appropriate position. The request is processed when the sliding window
reaches that position.
This way the algorithm detects the occurrences of all the patterns. In particular, we may 
store the leftmost and rightmost occurrence of each pattern.

We use the fingerprints to compare the subwords of $T$ with $\alpha_j$ and $\beta_j$.
To this end, we precompute $\Phi(\alpha_j)$ and $\Phi(\beta_j)$ for each $j$.
We also build a deterministic dictionary $\mathcal{D}$~\cite{DBLP:conf/icalp/Ruzic08} with an entry mapping $\Phi(\alpha_j)$ to $j$ for every pattern
(if there are multiple patterns with the same value of $\Phi(\alpha_j)$, the dictionary maps a fingerprint
to a list of indices).
These steps take $\Oh(s\ell)$ and $\Oh(s\log s)$, respectively.
Pending requests are maintained in a priority queue $\mathcal{Q}$,  implemented
using a binary heap\footnote{Hash tables could be used instead of the heap and the deterministic dictionary. Although this would improve the time complexity in Lemma~\ref{lem:nhp}, the running time of the algorithm in Thm.~\ref{thm:final} would not change and
failures with probability inverse polynomial with respect to $s$ would be introduced; see also a discussion before Thm.~\ref{thm:equal}.}
as pairs containing the pattern index (as a value)
and the position where the occurrence of $\beta_j$ is anticipated (as a key).

\DontPrintSemicolon
\begin{algorithm}[hbt]
\For{$i = 1$ \KwSty{to} $n-\ell+1$}{
	$h := \Phi(w[i..i+\ell-1])$\;\label{ln:hash}
	\ForEach{$j : \Phi(\alpha_j)= h$}{\label{ln:dict}
		add a request $(i+|P_j|-\ell, j)$ to $\mathcal{Q}$\label{ln:add}
	}
	\ForEach{request $(i,j)\in \mathcal{Q}$ at position $i$}{\label{ln:front}
		\If{$h = \Phi(\beta_j)$}{\label{ln:check}
		report an occurrence of $P_j$ at $i+\ell-|P_j|$\label{ln:report}
		}
		remove $(i,j)$ from $\mathcal{Q}$\label{ln:remove}
	}
}
\caption{Processing patterns with non-highly periodic $\alpha_j$.}\label{alg:proc}
\end{algorithm}

Algorithm~\ref{alg:proc} provides a detailed description of the processing phase.
Let us analyze its time and space complexities.
Due to the properties of Karp-Rabin fingerprints,
line~\ref{ln:hash} can be implemented in $\Oh(1)$ time. Also, the loops in lines~\ref{ln:dict}
and~\ref{ln:front} takes extra $\Oh(1)$ time even if the respective collections are empty. Apart from these,
every operation can be assigned to a request, each of them taking $\Oh(1)$ (lines~\ref{ln:dict} and~\ref{ln:front}-\ref{ln:check}) or
$\Oh(\log |\mathcal{Q}|)$ (lines~\ref{ln:add} and~\ref{ln:remove}) time.
To bound $|\mathcal{Q}|$, we need to look at the maximum number of pending requests.

\begin{fact}
For any pattern $P_j$ just $\Oh(1+\frac{n}{\ell})$ requests are created
and at any time at most one of them is pending.
\end{fact}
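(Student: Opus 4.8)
The plan is to bound both quantities by exploiting the fact that $\alpha_j$, the length-$\ell$ prefix of $P_j$, is \emph{not} highly periodic, i.e.\ $\per(\alpha_j) > \frac{\ell}{3}$. A request for pattern $P_j$ is created at position $i$ exactly when $\Phi(w[i..i+\ell-1]) = \Phi(\alpha_j)$, which (assuming no false-positives) means $w[i..i+\ell-1] = \alpha_j$, i.e.\ $\alpha_j$ occurs in $T$ at position $i$. So it suffices to show that any non-highly-periodic word of length $\ell$ has $\Oh(1 + \frac{n}{\ell})$ occurrences in a text of length $n$, and that two such occurrences are never ``nested'' in the sense of generating simultaneously-pending requests.

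First I would handle the count. Suppose $\alpha_j$ occurs at positions $i_1 < i_2$ with $i_2 - i_1 \le \frac{\ell}{2}$. Then $d := i_2 - i_1$ is a period of the overlapped region, which has length $\ell - d \ge \frac{\ell}{2}$; since this region is both a prefix and a suffix of $\alpha_j$ of combined length $2(\ell - d) \ge \ell \ge \ell + d - d$... more carefully, $d$ is a period of a subword of $\alpha_j$ of length $\ell - d$, and because $\alpha_j$ has two occurrences distance $d$ apart within itself only if... Actually the clean argument: if $\alpha_j$ occurs at positions $i_1$ and $i_2$ of $T$ with $0 < i_2 - i_1 \le \frac{\ell}{3}$, then $d = i_2 - i_1$ is a period of $w[i_1 .. i_2 + \ell - 1]$, a fragment of length $d + \ell$; restricting to $\alpha_j = w[i_1..i_1+\ell-1]$, we get that $d$ is a period of $\alpha_j$, so $\per(\alpha_j) \le d \le \frac{\ell}{3}$, contradicting that $\alpha_j$ is not highly periodic. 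Hence any two occurrences of $\alpha_j$ in $T$ are more than $\frac{\ell}{3}$ apart, so there are at most $\frac{n}{\ell/3} + 1 = \Oh(1 + \frac{n}{\ell})$ of them, and thus $\Oh(1 + \frac{n}{\ell})$ requests for $P_j$.

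For the second claim, a request $(i + |P_j| - \ell,\, j)$ created at step $i$ is pending during the interval of outer-loop indices $[i+1,\; i+|P_j|-\ell]$, which has length $|P_j| - \ell < \frac{4}{3}\ell - \ell = \frac{\ell}{3}$. If two requests for the same $P_j$ were pending simultaneously, they would have been created at two positions $i < i'$ with $i' \le i + |P_j| - \ell < i + \frac{\ell}{3}$, i.e.\ at occurrences of $\alpha_j$ less than $\frac{\ell}{3}$ apart — exactly what the previous paragraph ruled out. Therefore at most one request per pattern is pending at any time.

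The only real subtlety is pinning down the periodicity constant: I need the occurrence-gap bound ($> \frac{\ell}{3}$) and the pending-interval length bound ($< \frac{\ell}{3}$) to be mutually consistent, which is exactly why the grouping uses base $4/3$ (so $|P_j| < \frac{4}{3}\ell$) and why ``highly periodic'' is defined with the threshold $\frac{|w|}{3}$ rather than $\frac{|w|}{2}$. I expect the main obstacle is simply being careful that the two inequalities have the right constants and that the argument genuinely uses ``highly periodic'' (threshold $\ell/3$) and not merely ``periodic'' — if $\alpha_j$ were only guaranteed non-periodic (threshold $\ell/2$) the pending intervals of length up to $\ell/3$ would not be short enough to conclude disjointness, so the $4/3$ blow-up in the grouping is essential.
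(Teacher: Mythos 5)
Your proof is correct and follows the paper's argument exactly: requests correspond to occurrences of $\alpha_j$, which by non-high-periodicity must lie more than $\frac{\ell}{3}$ apart, while each request is pending for $|P_j|-\ell<\frac{\ell}{3}$ iterations, so at most one is live at a time and the total count is $\Oh(1+\frac{n}{\ell})$. One small slip in your closing remark: requiring $\alpha_j$ merely non-periodic (threshold $\ell/2$) is a \emph{stronger} hypothesis than non-highly-periodic, so it would only widen the gap between occurrences and make the disjointness easier, not break it; the $\ell/3$ threshold together with the $4/3$ grouping ratio is needed elsewhere, namely to guarantee (via Fact~\ref{fct:shr}) that every pattern falls into one of the three classes, not for the argument of this particular fact.
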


\begin{proof}
Note that there is a one-to-one correspondence between requests concerning $P_j$ 
and the occurrences of $\alpha_j$ in $T$. The distance between two such occurrences
must be at least $\frac{1}{3}\ell$, because otherwise the period of $\alpha_j$ would
be at most $\frac{1}{3}\ell$, thus making $\alpha_j$ highly periodic.
This yields the $\Oh(1+\frac{n}{\ell})$ upper bound on the total number of requests.
Additionally, any request is pending for at most $|P_j|-\ell< \frac{1}{3}\ell$ iterations
of the main \textbf{for} loop. Thus, the request corresponding to an occurrence of $\alpha_j$
is already processed before the next occurrence appears.
\end{proof}

Hence, the scanning phase uses $\Oh(s)$ space and takes $\Oh(n+s(1+\frac{n}{\ell})\log s)$ time.
Taking preprocessing into account, we obtain bounds claimed in Lemma~\ref{lem:nhp}.

\subsection{Highly Periodic Patterns}
\label{sec:periodic}

\begin{lemma}\label{lem:hp}
Let $\ell$ be an arbitrary integer.
Given a text $T$ of length $n$ and a collection of highly periodic patterns $P_1,\ldots,P_s$ such that for $1\le j\le s$ we have
$\ell \le |P_j| < \frac43 \ell$, we can compute
the leftmost occurrence of each pattern $P_{j}$ in $T$ using $\Oh(n+s(1+\frac{n}{\ell})\log s+s\ell)$
total time and $\Oh(s)$ space.
\end{lemma}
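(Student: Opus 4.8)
The plan is to exploit the fact that each $P_j$ has a short period $p_j = \per(P_j) \le \tfrac13|P_j|$, so $P_j$ is determined by a primitive root $r_j := P_j[1..p_j]$, an exponent telling how many full copies of $r_j$ fit, and a short tail. First I would compute $\per(P_j)$ for every $j$ using Lemma~\ref{lem:comper} in $\Oh(\sum_j |P_j|) = \Oh(s\ell)$ total time and constant extra space per pattern; this also gives the primitive root $r_j$ as a fragment of $P_j$. The key structural observation is that an occurrence of $P_j$ in $T$ lies inside a maximal \emph{run} (maximal periodic fragment) of $T$ whose period is (a cyclic rotation of) $r_j$, and conversely every sufficiently long run of $T$ with primitive period $q$, $|q| \le \ell/3$, can host occurrences only of those patterns $P_j$ whose primitive root is a cyclic rotation of $q$. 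So the algorithm should: (i) scan $T$ with a sliding window of length $\ell$, maintain the fingerprint of the current window, and detect positions where the window equals $\alpha_j = P_j[1..\ell]$ for some $j$ — using a deterministic dictionary $\mathcal{D}$ mapping $\Phi(\alpha_j)$ to the list of matching indices, exactly as in Section~\ref{sec:aperiodic}; (ii) because $\alpha_j$ is highly periodic with period $p_j \le \ell/3$, its occurrences in $T$ within a given run are consecutive shifts by $p_j$, so instead of generating one request per occurrence we detect the \emph{start} of each block of occurrences and the \emph{end} of the enclosing run, then handle the whole block at once.

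Concretely, at each window position $i$ where $\Phi(T[i..i+\ell-1]) = \Phi(\alpha_j)$, if position $i-1$ was \emph{not} an occurrence of $\alpha_j$ (equivalently $T[i-1] \ne T[i-1+p_j]$, checkable in $\Oh(1)$) then $i$ begins a new block; I would then compute, in $\Oh(1)$ amortized time via fingerprints and exponential/binary search, how far the period $p_j$ extends to the right in $T$, i.e. the largest $e$ with $T[i..i+e-1]$ having period $p_j$. The pattern $P_j$ occurs at every position $i' \in [i, i + e - |P_j|]$ with $i' \equiv i \pmod{p_j}$ — provided $e \ge |P_j|$ — and in particular the leftmost such $i'$ (namely $i$ itself, if $e\ge|P_j|$) is the leftmost occurrence of $P_j$ contributed by this block. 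Taking the minimum over all blocks gives the leftmost occurrence of $P_j$. The charging argument for the time bound mirrors the previous section: consecutive blocks of occurrences of $\alpha_j$ are separated by at least one window position (since a block ends where the period breaks), and more importantly two distinct \emph{starts} of blocks are $\ge \ell/3$ apart is \emph{not} quite true — instead I bound the number of block-starts for a fixed $j$ by $\Oh(1 + n/\ell)$ by observing that each block, together with the run extension, spans at least $\ell$ characters of $T$ and distinct blocks of $\alpha_j$-occurrences live in distinct maximal runs, and $T$ has at most $\Oh(n/\ell)$ maximal runs of period $\le \ell/3$ that are long enough (length $\ge \ell$). Summing, the dictionary lookups and block-handling cost $\Oh\!\bigl(n + s(1+\tfrac{n}{\ell})\log s\bigr)$, plus $\Oh(s\ell)$ preprocessing and $\Oh(s\log s) \subseteq \Oh(s\ell)$ for building $\mathcal{D}$, which is the claimed bound. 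Space is $\Oh(s)$: we store $p_j$, $\Phi(\alpha_j)$, the current best occurrence per pattern, the dictionary $\mathcal{D}$, and $\Oh(1)$ state for the sliding-window fingerprint; no priority queue of pending requests is needed here since each block is resolved immediately when its start is seen.

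The main obstacle I anticipate is making the per-block work truly $\Oh(1)$ amortized rather than $\Oh(\log n)$: naively, extending the period $p_j$ rightward from position $i$ costs a binary/exponential search with $\Oh(\log(n/\ell))$ fingerprint comparisons. The fix is that within one maximal run of $T$ of primitive period $q$, \emph{all} patterns $P_j$ whose root is a rotation of $q$ get their block extension from the \emph{same} run boundary, so I would (a) when first entering such a run, compute its right end once by exponential search — charging $\Oh(\log n)$ to the run, and there are only $\Oh(n/\ell)$ qualifying runs, contributing $\Oh(\tfrac{n}{\ell}\log n)$ total, which is absorbed since $\log n = \Oh(\log s)$ unless $s$ is tiny (and if $s$ is tiny the $\Oh(n)$ scanning term or a direct argument dominates) — and (b) then answer each pattern's occurrence test inside that run in $\Oh(1)$ by a single fingerprint equality $\Phi(T[i..i+|P_j|-1]) = \Phi(P_j)$, having precomputed $\Phi(P_j)$. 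One must also handle the boundary bookkeeping — a run may start before the window or end after it, and $P_j$ might match with its period boundaries not aligned to the run's canonical phase — but these are $\Oh(1)$ checks per pattern occurrence using the run's period $q$ and offset, so they do not affect the asymptotics.
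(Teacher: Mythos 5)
Your plan takes a genuinely different route from the paper, and while the central intuition (exploit the periodic structure to avoid generating one request per occurrence of $\alpha_j$) is exactly right, several steps as written do not go through.

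\textbf{The block-start test is wrong.} You claim that ``$i-1$ was not an occurrence of $\alpha_j$'' is ``equivalently $T[i-1]\ne T[i-1+p_j]$.'' First, the relevant question is whether $\alpha_j$ occurs at $i-p_j$, not at $i-1$. Second, $T[i-1]\ne T[i-1+p_j]$ is only a \emph{necessary} condition for $\alpha_j$ not occurring at $i-p_j$: if $T[i-1]=T[i-1+p_j]$ but the period already breaks at $T[i-2]$, then the maximal run of period $p_j$ around $T[i..i+\ell-1]$ starts strictly between $i-p_j$ and $i$, so $i$ is the \emph{first} $\alpha_j$-occurrence in that run and $\alpha_j$ does not occur at $i-p_j$. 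Your check would fail to mark $i$ as a block start, no earlier block start covers it, and you would silently miss the occurrence of $P_j$ at $i$ --- which could be the leftmost one. The paper sidesteps this by maintaining, per distinct $\alpha_j$, the position of its \emph{previous occurrence} and testing whether the gap equals $\per(\alpha_j)$; that test is exact. You could fix your version by a fingerprint check $\Phi(T[i-p_j..i-1])=\Phi(T[i..i+p_j-1])$, but note that this already needs the same ``last occurrence'' bookkeeping to know what you previously saw.

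\textbf{The run-counting bound and the exponential-search charging are not established.} You bound block-starts per pattern by the number of maximal runs of period at most $\ell/3$ and length at least $\ell$, asserting this is $\Oh(n/\ell)$; you do not prove it, and it requires a nontrivial argument about how such runs can overlap. The paper proves a strictly simpler and sharper fact directly: any two non-shiftable occurrences of the same $\alpha_j$ must be at least $\ell/2$ apart (a two-line application of the periodicity lemma), giving $\Oh(1+n/\ell)$ per pattern without ever mentioning runs. Moreover, your exponential search contributes $\Oh((n/\ell)\log n)$ total, and the claim that this is absorbed because ``$\log n = \Oh(\log s)$ unless $s$ is tiny'' is false in general ($s$ can be constant while $n$ is huge); the claimed bound $\Oh(n+s(1+n/\ell)\log s+s\ell)$ then genuinely does not accommodate the extra term. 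You also implicitly need fingerprints (or characters) to the \emph{right} of the current window to test $\Phi(T[i..i+|P_j|-1])$ or to extend a run; this either costs $\Oh(n)$ space for prefix fingerprints (not allowed) or requires deferring the check to when the window arrives --- which is exactly the paper's priority-queue-of-requests mechanism you were trying to avoid.

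\textbf{What the paper does instead.} It reuses Algorithm~\ref{alg:proc} unchanged except that requests are created only at \emph{non-shiftable} occurrences of $\alpha_j$ (those at position $i$ with no occurrence at $i-\per(\alpha_j)$); the auxiliary Fact about prefixes of length $\geq 2\per(x)$ guarantees that the leftmost occurrence of $P_j$ is witnessed by a non-shiftable occurrence of $\alpha_j$, and the $\geq\ell/2$ separation of non-shiftable $\alpha_j$-occurrences bounds the queue and the total request count. This keeps the structure, the bookkeeping (one ``last position'' per distinct $\alpha_j$, via a second deterministic dictionary), and the complexity accounting identical to Lemma~\ref{lem:nhp}, with no run detection, no exponential search, and no lookahead. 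Your observation that, once the run extension $e$ is known, $P_j$ occurs at $i$ iff $e\geq i+|P_j|-1$ is correct and elegant, but it is not needed once you filter by shiftability.
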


The solution is basically the same as in the proof of Lemma~\ref{lem:nhp}, except that the algorithm ignores certain \emph{shiftable} occurrences.
An occurrence of $x$ at position $i$ of $T$ is called \emph{shiftable} if there is another occurrence of $x$ at position $i-\per(x)$.
The remaining occurrences are called \emph{non-shiftable}. Notice that the leftmost occurrence is always non-shiftable, so indeed we can
safely ignore some of the shiftable occurrences of the patterns.
Because $2\per(P_j)\le \frac23 |P_j| \le \frac89\ell < \ell$, the following fact implies that if an occurrence of $P_j$ is non-shiftable, then the occurrence of $\alpha_j$ at the same
position is also non-shiftable.

\begin{fact}
Let $y$ be a prefix of $x$ such that $|y|\ge 2\per(x)$.
Suppose $x$ has a non-shiftable occurrence at position $i$ in $w$.
Then, the occurrence of $y$ at position $i$ is also non-shiftable.
\end{fact}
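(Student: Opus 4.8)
The statement to prove is: if $y$ is a prefix of $x$ with $|y|\ge 2\per(x)$, and $x$ has a non-shiftable occurrence at position $i$ in $w$, then the occurrence of $y$ at position $i$ is also non-shiftable. I will prove the contrapositive: assuming the occurrence of $y$ at position $i$ is shiftable, I show the occurrence of $x$ at position $i$ is shiftable. Write $p=\per(x)$ and $q=\per(y)$. Since $y$ is a prefix of $x$, and $p$ is a period of $x$, $p$ is also a period of $y$; hence $q\le p$. The key point will be to argue that in fact $q=p$, using the hypothesis $|y|\ge 2\per(x)=2p$ and the periodicity lemma.

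First I would establish $q=p$. The occurrence of $y$ at position $i$ being shiftable means there is an occurrence of $y$ at position $i-q'$ for some period $q'$ of $y$ — more precisely, shiftability is defined with respect to $\per(y)=q$, so there is an occurrence of $y$ at position $i-q$. Now $p$ and $q$ are both periods of $y$ and $p+q\le 2p\le |y|$, so by the periodicity lemma $\gcd(p,q)$ is a period of $y$. Since $y[1..p]=x[1..p]$ is primitive (the shortest period of $x$ is always primitive, as recalled in the preliminaries) and $\gcd(p,q)$ divides $p=|y[1..p]|$, primitivity forces $\gcd(p,q)=p$, hence $p\mid q$; combined with $q\le p$ this gives $q=p$.

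Next, from the shiftable occurrence of $y$ at position $i-q=i-p$, I would show $x$ occurs at position $i-p$. Concretely, the occurrence of $x$ at position $i$ together with the occurrence of $y$ at position $i-p$ gives us $w[i-p..i+|x|-1]$: the prefix $w[i-p..i+|y|-1-p]$ equals $y$ (length $|y|$), and $w[i..i+|x|-1]$ equals $x$; these two fragments overlap in a region of length $|y|-p\ge p\ge 1$, so their union $w[i-p..i+|x|-1]$ has $p$ as a period (it is consistent on the overlap since both are governed by the period-$p$ structure of $x$ — here I use that $y$ is a prefix of $x$ so $y$'s period-$p$ pattern agrees with $x$'s). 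A fragment of length $|x|+p$ with period $p$ whose suffix of length $|x|$ equals $x$ must have its prefix of length $|x|$ also equal to $x$; thus $x$ occurs at position $i-p=i-\per(x)$, so the occurrence of $x$ at $i$ is shiftable, completing the contrapositive.

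The main obstacle is the step $q=p$: one must be careful that "shiftable" for $y$ refers to shifting by $\per(y)$, not by an arbitrary period, and that the periodicity lemma applies (which needs $p+q\le|y|$, guaranteed by $|y|\ge 2p\ge p+q$). Once $q=p$ is in hand, the rest is a routine overlap/period-extension argument, but I would state carefully why the two overlapping occurrences are consistent — it is exactly because both $x$ and $y$ carry the same period-$p$ pattern, $y$ being a prefix of $x$.
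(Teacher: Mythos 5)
Your proof is correct and takes essentially the same route as the paper's: prove the contrapositive, first establish $\per(y)=\per(x)$ via the periodicity lemma, then transfer the shift of $y$ by $\per(y)$ to a shift of $x$ by $\per(x)$. The only cosmetic differences are that you spell out the $\per(y)=\per(x)$ step in more detail (invoking primitivity of $x[1..\per(x)]$ explicitly, which the paper leaves implicit), and for the final step the paper writes $x=\rho^k\rho'$ and observes $w[i-\per(x)..i+|x|-1]=\rho^{k+1}\rho'$, whereas you use an equivalent overlapping-periodic-fragments argument.
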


\begin{proof}
Note that $\per(y)+\per(x)\le |y|$ so the periodicity lemma implies
that $\per(y)=\per(x)$.

Let $x=\rho^k\rho'$ where $\rho$ is the shortest period of $x$.
Suppose that the occurrence of $y$ at position $i$ is shiftable,
meaning that $y$ occurs at position $i-\per(x)$.
Since $|y|\ge \per(x)$, $y$ occurring at position $i-\per(x)$
implies that $\rho$ occurs at the same position.
Thus
$w[i-\per(x)..i+|x|-1]=\rho^{k+1}\rho'$.
But then $x$ clearly occurs at position $i-\per(x)$,
which contradicts the assumption that its occurrence at position $i$ is non-shiftable.
\end{proof}

Consequently, we may generate requests only for the non-shiftable occurrences of $\alpha_j$. In other words,
if an occurrence of $\alpha_{j}$ is shiftable, we do not create the requests and proceed immediately to line~\ref{ln:front}.
To detect and ignore such shiftable occurrences, we maintain the position of the last occurrence of every
$\alpha_{j}$. However, if there are multiple patterns sharing the same prefix
$\alpha_{j_1}=\ldots=\alpha_{j_k}$, we need to be careful so that the time to detect
a shiftable occurrence is $\Oh(1)$ rather than $\Oh(k)$.
To this end, we build another deterministic dictionary, which stores
for each $\Phi(\alpha_j)$ a pointer to the variable where we maintain the position of the previously encountered occurrence
of $\alpha_{j}$. The variable is shared by all patterns with the same prefix $\alpha_{j}$.

It remains to analyze the complexity of the modified algorithm. First, we need to bound the number
of non-shiftable occurrences of a single $\alpha_{j}$. Assume that there is a non-shiftable occurrence
$\alpha_{j}$ at positions $i'<i$ such that $i' \geq i - \frac{1}{2}\ell$. Then $i-i'\leq \frac{1}{2}\ell$ is a period of $T[i'..i+\ell-1]$.
By the periodicity lemma, $\per(\alpha_{j})$ divides $i-i'$, and therefore $\alpha_{j}$ occurs at position
$i'-\per(\alpha_{j})$, which contradicts the assumption that the occurrence at position $i'$ is non-shiftable.
Consequently, the non-shiftable occurrences of every $\alpha_{j}$ are at least $\frac{1}{2}\ell$ characters apart,
and the total number of requests and the maximum number of pending requests can be bounded by
$\Oh(s(1+\frac{n}{\ell}))$ and $\Oh(s)$, respectively, as in the proof of Lemma~\ref{lem:nhp}.
Taking into the account the time and space to maintain the additional components, which
are $\Oh(n+s\log s)$ and $\Oh(s)$, respectively, the final bounds remain the same.

\subsection{Summary}
\begin{theorem}\label{thm:long}
Given a text $T$ of length $n$ and patterns $P_1,\ldots,P_s$ of total length $m$,
using $\Oh(n\log n+m+s\frac{n}{\ell}\log s)$ total time and $\Oh(s)$ space
we can compute
the leftmost occurrences in $T$ of every pattern $P_j$ of length at least $\ell$.
\end{theorem}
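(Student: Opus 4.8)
The plan is to combine the two preceding lemmas by a simple bucketing of the long patterns. First I would discard all patterns with $|P_j| > n$, since they cannot occur in $T$; this costs $\Oh(m)$ time and does not affect the bounds. For the surviving patterns I would partition them into $\Oh(\log n)$ groups according to the value $i = \floor{\log_{4/3}|P_j|}$, so that within a group all lengths lie in $[\ell_i, \tfrac43\ell_i)$ with $\ell_i = \ceil{(4/3)^i}$. Each group is processed independently: I run Lemma~\ref{lem:comper} on $P_j$, $\alpha_j$, and $\beta_j$ (the length-$\ell_i$ prefix and suffix), which by Fact~\ref{fct:shr} lets me split the group into three subclasses---$P_j$ highly periodic, $\alpha_j$ not highly periodic, or $\beta_j$ not highly periodic---in $\Oh(|P_j|)$ time and $\Oh(1)$ extra space per pattern. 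To the first subclass I apply Lemma~\ref{lem:hp}; to the second, Lemma~\ref{lem:nhp}; to the third, Lemma~\ref{lem:nhp} run on the reversed text and reversed patterns (this reports the reversed occurrences, from which the leftmost occurrence in $T$ is recovered by an index transformation). Since each subclass yields the leftmost occurrence of its patterns and every long pattern lands in exactly one subclass of exactly one group, taking the appropriate occurrence over the (single) subclass that contains $P_j$ gives its leftmost occurrence in $T$.

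For the complexity, fix a group $i$ with $s_i$ patterns of total length $m_i$, and write $\ell = \ell_i$. The bound $s_i \ell \le \sum_{j \in \text{group } i} |P_j| \cdot \tfrac{\ell}{|P_j|} \le \tfrac43 \sum_{j} |P_j| = \Oh(m_i)$ (using $|P_j| \ge \tfrac34\ell$ within the group, roughly) shows that the $s\ell$ term in Lemmas~\ref{lem:nhp} and~\ref{lem:hp} is $\Oh(m_i)$ for this group; more carefully, $s_i \ell \le \tfrac{4}{3}\sum_j |P_j| = \Oh(m_i)$ since each $|P_j| \ge \ell$. The $s(1+\tfrac n\ell)\log s$ term for group $i$ is $\Oh(s_i \log s + s_i \tfrac n\ell \log s) = \Oh(s_i \log s + s_i \tfrac{n}{\ell}\log s)$; summing the first part over all groups gives $\Oh(s\log s) = \Oh(s\log n)$, and summing $s_i\tfrac n\ell\log s$ gives at most $\sum_i s_i \tfrac{n}{\ell_i}\log s \le \tfrac{n}{\ell_{\min}}\cdot(\text{something})$---here I should be careful and instead bound it as $\Oh(s\tfrac n\ell \log s)$ where $\ell$ is the \emph{global} threshold from the theorem statement, using $\ell_i \ge \ell$ for every group and $\sum_i s_i = s$. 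The $\Oh(n)$ per-lemma term summed over $\Oh(\log n)$ groups is $\Oh(n\log n)$, and $\sum_i m_i = m$. The period computations (Lemma~\ref{lem:comper}) cost $\Oh(\sum_j |P_j|) = \Oh(m)$ total in $\Oh(1)$ extra space. Altogether the time is $\Oh(n\log n + m + s\tfrac n\ell\log s)$ and the space is $\Oh(s)$, since the groups are processed one at a time and each uses $\Oh(s_i) \le \Oh(s)$ space.

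The only mild subtlety---and the one place the plan needs care rather than routine bookkeeping---is the aggregation of the $s_i(1+\tfrac n\ell_i)\log s$ terms across the $\Oh(\log n)$ groups. One must verify that $\sum_i s_i\bigl(1+\tfrac{n}{\ell_i}\bigr)\log s$ collapses to $\Oh\bigl(s\log s + s\tfrac n\ell \log s\bigr)$; this follows because $\ell_i \ge \ell$ for all groups (the theorem's hypothesis restricts to $|P_j| \ge \ell$, so the smallest group index has $\ell_i$ at least $\ell$ up to the rounding in $\ceil{(4/3)^i}$), hence $\tfrac{n}{\ell_i} \le \tfrac{n}{\ell}$, and $\sum_i s_i = s$. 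The remaining bounds---the $\Oh(n)$ scanning term per group contributing $\Oh(n\log n)$, and the $\Oh(s\ell_i)$ preprocessing per group contributing $\Oh(m)$ in total via $s_i\ell_i = \Oh(m_i)$---are immediate. The correctness of the reversal trick for the third subclass is also routine: an occurrence of $P_j$ at position $p$ in $T$ corresponds to an occurrence of the reversed pattern at position $n - p - |P_j| + 2$ in the reversed text, so leftmost in $T$ becomes rightmost in the reversal, which is exactly why Lemma~\ref{lem:nhp} is stated to return both the leftmost \emph{and} the rightmost occurrence.
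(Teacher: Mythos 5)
Your proof matches the paper's: same decomposition into $\Oh(\log n)$ length groups, same three-way split by periodicity via Lemma~\ref{lem:comper} and Fact~\ref{fct:shr}, same appeal to Lemmas~\ref{lem:nhp} and~\ref{lem:hp} (with reversal for the non-highly-periodic suffix case), and the same aggregation of the per-group costs. Your bookkeeping of $s_i\ell_i \le m_i$ and $\ell_i \ge \ell$ (up to rounding) is exactly the bound the paper expresses per-pattern as $\Oh(|P_j| + (1+\tfrac{n}{|P_j|})\log s)$.
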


\begin{proof}
The algorithm distributes the patterns into $\Oh(\log n)$ groups
according to their lengths, and then into three classes according to their repetitiveness,
which takes $\Oh(m)$ time and $\Oh(s)$ space in total. Then, it applies either Lemma~\ref{lem:nhp}
or Lemma~\ref{lem:hp} on every class. It remains to show that the running times of
all those calls sum up to the claimed bound.
Each of them can be seen as $\Oh(n)$ plus $\Oh(|P_j|+ (1+\frac{n}{|P_j|})\log s)$ per every pattern $P_j$.
Because $\ell \le |P_j| \le n$ and there are $\Oh(\log n)$ groups, this sums up to
$\Oh(n\log n+m+s\frac{n}{\ell}\log s)$.
\end{proof}
Using Thm.~\ref{thm:short} for all patterns of length at most $\min(n,s)$, and (if $s\leq n$)
Thm.~\ref{thm:long} for patterns of length at least $s$, we obtain our main theorem.

\begin{theorem}\label{thm:final}
Given a text $T$ of length $n$ and patterns $P_1,\ldots,P_s$ of total length $m$,
we can compute
the leftmost occurrence in $T$ of every pattern $P_j$ using $\Oh(n\log n
+m)$ total time and $\Oh(s)$ space.
\end{theorem}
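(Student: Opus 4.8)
The plan is to assemble Theorem~\ref{thm:final} from the two regime-specific results already proved, choosing the length threshold so that the error terms from both parts are absorbed. Concretely, I would split the pattern set $P_1,\ldots,P_s$ at the threshold $\tau=\min(n,s)$: patterns of length at most $\tau$ are handled by Theorem~\ref{thm:short} (with $\ell=\tau$), and, in the case $s\le n$, patterns of length at least $s$ are handled by Theorem~\ref{thm:long} (with $\ell=s$). Note these two ranges cover all pattern lengths: if $s>n$ then $\tau=n$ and every pattern of length $>n$ cannot occur in $T$ at all, so it may be discarded; if $s\le n$ then $\tau=s$ and the ranges $[1,s]$ and $[s,\infty)$ overlap at $s$ and jointly cover everything. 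So the first step is this case analysis and the observation that patterns longer than $n$ are trivially rejected.

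Next I would add up the costs. For the short patterns, Theorem~\ref{thm:short} with $\ell=\tau=\min(n,s)$ gives time $\Oh(n\log\tau + s\frac{n}{\tau} + m)$ and space $\Oh(s+\tau)=\Oh(s+\min(n,s))=\Oh(s)$. The term $n\log\tau\le n\log n$; the term $s\frac{n}{\tau}$ is $s\cdot\frac{n}{\min(n,s)}$, which equals $s$ when $s\ge n$ (so it is $\Oh(s)\subseteq\Oh(m)$ since each pattern has length $\ge1$, actually $\Oh(s)\subseteq\Oh(n)$ suffices here as $s\le m$... more simply $s\frac{n}{s}=n$) and equals $n$ when $s<n$. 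Either way $s\frac{n}{\tau}=\Oh(n)$. So the short-pattern cost is $\Oh(n\log n + m)$ in $\Oh(s)$ space. For the long patterns (only when $s\le n$), Theorem~\ref{thm:long} with $\ell=s$ gives time $\Oh(n\log n + m + s\cdot\frac{n}{s}\log s)=\Oh(n\log n+m+n\log s)=\Oh(n\log n+m)$ since $\log s\le\log n$, and space $\Oh(s)$. Summing the two contributions yields total time $\Oh(n\log n+m)$ and total space $\Oh(s)$, which is the claim.

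Finally, I would note a small bookkeeping point: both Theorem~\ref{thm:short} and Theorem~\ref{thm:long} return the leftmost occurrence of each pattern they handle, and since every pattern falls into exactly one of the two invocations (or is discarded as too long), the union of the outputs is precisely the leftmost occurrence of every $P_j$, with no reconciliation needed.

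The main obstacle is really just verifying that the threshold choice $\ell=\min(n,s)$ makes the cross term $s\frac{n}{\ell}$ from the short-pattern bound collapse to $\Oh(n)$ and the term $s\frac{n}{\ell}\log s$ from the long-pattern bound collapse to $\Oh(n\log n)$; everything else is immediate from the cited theorems. There is no deep new idea here — the work was in Sections~\ref{sec:short} and~\ref{sec:long}, and this theorem is the clean combination, so I would keep the proof to a few lines as in the excerpt's own remark preceding the statement.
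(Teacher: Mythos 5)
Your proof is correct and follows exactly the paper's approach: the paper's entire proof is the one-sentence remark preceding the theorem, which says to apply Thm.~\ref{thm:short} with threshold $\min(n,s)$ and (when $s\le n$) Thm.~\ref{thm:long} with threshold $s$, and you have simply spelled out the arithmetic. One small slip worth fixing: when $s>n$ we have $\tau=n$, so $s\frac{n}{\tau}=s$, which is $\Oh(m)$ rather than $\Oh(n)$ as you claim at the end of that sentence --- this is harmless since the $+m$ term already appears in the bound, but the statement ``either way $s\frac{n}{\tau}=\Oh(n)$'' is not literally true.
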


\section{Computing Longest Occurring Prefixes}
\label{sec:prefixes}

In this section we extend Thm.~\ref{thm:final} to compute, for every pattern $P_j$,
its longest prefix occurring in the text. A straightforward extension uses binary search
to compute the length $\ell_j$ of the longest prefix of $P_j$ occurring in $T$. All binary
searches are performed in parallel, that is, we proceed in $\Oh(\log n)$ phases.
In every phase we check, for every $j$, if $P_j[1..1+\ell_j]$ occurs in $T$ using
Thm.~\ref{thm:final} and then update the corresponding $\ell_j$ accordingly.
This results in $\Oh(n\log^2 n+m)$ total time complexity. To avoid the logarithmic multiplicative
overhead in the running time, we use a~more complex approach requiring a careful
modification of all the components developed in Sections~\ref{sec:short} and~\ref{sec:long}.

\paragraph{\textbf{Short patterns.}} 

We proceed as in Sect.~\ref{sec:short} while maintaining a tentative longest prefix occurring in $T$
for every pattern $P_j$, denoted $P_j[1..\ell_j]$. 
Recall that after processing a block $T_i$ we obtain, for each pattern $P_j$, a node such that the corresponding substring is equal to $P_j$ provided that $P_j$ occurs in $T_i$. 
Now we need a stronger property, which is that for any length $k$, $\ell_j\le k\le |P_j|$, the ancestor at string depth $k$ of that node (if any) corresponds to $P_j[1..k]$ provided that $P_j[1..k]$ occurs in $T_i$. 
This can be guaranteed by modifying the procedure described in Lemma~\ref{lem:merge}:
if a child of $v_1$ has no corresponding child of $v_2$, we report $v_2$ as the twin of all nodes in the subtree rooted at that child of $v_1$. 
(Notice that now the string depth of $v_1\in \T_1$ might be larger than the string depth of its twin $v_2\in \T_2$, but we generate exactly one twin
for every $v_1\in \T_1$.) 
Using the stronger property we can update every $\ell_j$ by first checking if $P_j[1..\ell_j]$ occurs in $T_i$, and if so incrementing $\ell_j$ as long as possible.
In more detail, let $T_i[b..e]$ denote the substring corresponding to the twin of $P_j$. 
If $|T_i[b..e]|<\ell_j$, there is nothing to do. 
Otherwise, we check whether $T_i[b..(b+\ell_j-1)]=P_j[1..\ell_j]$ using fingerprints, and if so start to naively compare $T_i[b+\ell_j..e]$ and $P_j[\ell_j+1..|P_j|]$. 
Because in the end $\sum_j \ell_j  \leq m$, updating every $\ell_j$ takes $\Oh(s\frac{n}{\ell}+m)$ additional total time.

\begin{theorem}\label{thm:short2}
Given a text $T$ of length $n$ and patterns $P_1,\ldots,P_s$ of total length $m$,
using $\Oh(n\log\ell+s\frac{n}{\ell}+m)$ total time and $\Oh(s+\ell)$ space we can compute
the longest prefix occurring in $T$ for every pattern $P_j$ of length at most $\ell$.
\end{theorem}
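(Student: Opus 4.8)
The plan is to mirror the short-pattern algorithm of Theorem~\ref{thm:short} as closely as possible, replacing ``compute the leftmost occurrence'' with ``compute the longest prefix occurring in the text.'' First I would recall the block decomposition: partition $T$ into $\Oh(\frac{n}{\ell})$ overlapping blocks $T_1,T_2,\ldots$ of length $2\ell$, so that every subword of length at most $\ell$ lies entirely in some block, and build once (in $\Oh(m+\sigma^\eps)$ time, by Lemma~\ref{lem:sort}) the compacted trie of the patterns $P_1,\ldots,P_s$ of length at most $\ell$. Maintain throughout a variable $\ell_j$ for each $j$, the length of the longest prefix of $P_j$ found so far to occur in $T$; initialize $\ell_j=0$. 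For each block $T_i$, construct its suffix tree in $\Oh(\ell\log\ell)$ time and $\Oh(\ell)$ space, precompute the fingerprints of the prefixes of $T_i$, and discard the suffix tree afterwards; also precompute once the fingerprints of all patterns.

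The key modification is in how the trie of patterns is merged with the suffix tree of $T_i$. The plain statement of Lemma~\ref{lem:merge} only produces, for each explicit node $v_1$ of the trie, a candidate twin in the suffix tree, which tells us whether the \emph{whole} pattern $P_j$ occurs in $T_i$. Here I need the stronger guarantee that for every length $k$ with $\ell_j\le k\le|P_j|$, the ancestor at string depth $k$ of the node assigned to $P_j$ is the locus of $P_j[1..k]$ whenever $P_j[1..k]$ occurs in $T_i$. To get this, modify the recursive traversal of Lemma~\ref{lem:merge}: when a (possibly implicit) child of $v_1$ in the trie has no matching child of $v_2$ in the suffix tree, instead of stopping, report $v_2$ itself as the ``twin'' of every node in the entire subtree of the trie rooted at that child. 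The invariant becomes: the node reported for $v_1$ is either the true twin of $v_1$ or, if $v_1$ has no twin, the locus of the longest prefix of $v_1$'s string that does occur in $T_i$. Crucially this still produces exactly one reported node per node of the trie, so the running time stays $\Oh(\ell+s)$ per block; the only subtlety is that the string depth of $v_1$ may now exceed that of its reported node, but that is exactly the information we want.

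Given this, updating the $\ell_j$'s for block $T_i$ is a short fingerprint-and-naive-extend step. Let $T_i[b..e]$ be the substring corresponding to the node reported for $P_j$. If $e-b+1<\ell_j$ there is nothing to do. Otherwise check $T_i[b..b+\ell_j-1]=P_j[1..\ell_j]$ via fingerprints in $\Oh(1)$ time; if this holds, naively compare $T_i[b+\ell_j..e]$ with $P_j[\ell_j+1..]$, incrementing $\ell_j$ for each matching character. By the stronger property of the modified merge, every increment is justified: if $P_j[1..k]$ occurs in $T_i$ at all, the reported node's depth-$k$ ancestor witnesses it, so $T_i[b..b+k-1]=P_j[1..k]$. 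Each block therefore costs $\Oh(\ell\log\ell+s)$ plus the total extension cost; since $\ell_j$ is nondecreasing across blocks and ends bounded by $|P_j|$, the extensions sum to $\Oh(m)$ overall, and the ``no-op'' checks cost $\Oh(s)$ per block, i.e.\ $\Oh(s\frac{n}{\ell})$ total. Adding the $\Oh(\frac{n}{\ell})\cdot\Oh(\ell\log\ell)=\Oh(n\log\ell)$ for the suffix trees and $\Oh(m+\sigma^\eps)=\Oh(m)$ for preprocessing with small enough $\eps$, the total is $\Oh(n\log\ell+s\frac{n}{\ell}+m)$ time and $\Oh(s+\ell)$ space, as claimed.

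The main obstacle is verifying the correctness of the modified merge, specifically that assigning the fallback node $v_2$ to an entire trie subtree preserves the invariant ``reported node is the true twin, or $v_1$ has no twin,'' strengthened to ``reported node is the locus of the longest occurring prefix.'' The delicate point is that when recursion branches off into a trie subtree with no counterpart in the suffix tree, we must argue that no descendant in that subtree has a twin either — which is immediate, since a twin of a descendant would force the branching edge label to be matched — and that the reported $v_2$ really is the deepest common ancestor, which holds because the recursion descended as far as the edge labels agreed. Once that is pinned down, the amortization of the naive extensions against $\sum_j\ell_j\le m$ and the fact that $\ell_j$ never decreases are routine.
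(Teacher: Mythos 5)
Your proposal matches the paper's proof essentially step for step: same block decomposition, same modification of the merge procedure (falling back to the last matched suffix-tree node for an entire unmatched subtree), the same fingerprint-check-then-naive-extend update of $\ell_j$, and the same amortization of the extensions against $\sum_j \ell_j \le m$. The extra paragraph you add justifying the modified merge invariant is correct and only spells out what the paper leaves implicit.
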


\paragraph{\textbf{Long patterns.}}

As in Sect.~\ref{sec:long}, we again distribute all patterns of length at least $\ell$ into groups.
However, now for patterns in the $i$-th group (satisfying $\floor{\log_{4/3}|P_j|}=i$), we set $g_i=\ceil{(4/3)^i}$ and additionally require that $P_j[1..g_i]$ occurs in $T$. 
To verify that this condition is true, we process the groups in the decreasing order of the index $i$ and apply Thm.~\ref{thm:equal} to prefixes $P_j[1..g_i]$. 
If for some pattern $P_j$ the prefix fails to occur in $T$, we replace $P_j$ setting $P_j:= P_j[1..g_i-1]$. 
Observe that this operation moves $P_j$ to a group with a smaller index (or makes $P_j$ a short pattern). 
Additionally, note that in subsequent steps the length of $P_j$ decreases geometrically, so the total length of patterns for which 
we apply Thm.~\ref{thm:equal} is $\Oh(m)$ and thus the total running time of this preprocessing phase is $\Oh(n\log n+m)$
as long as $\ell\ge \log s$. 
%
Hence, from now on we consider only patterns $P_j$ belonging to the $i$-th group, i.e., such that $g_i \leq |P_j|<\frac{4}{3}g_i$ and $P_j[1..g_i]$ occurs in $T$.

As before, we classify patterns depending on their periodicity. 
However, now the situation is more complex, because we cannot reverse the text and the patterns. 
As a warm-up, we first describe how to process patterns $P_j$ with a non-highly periodic prefix $\alpha_j=P_j[1..\ell]$. 
While not used in the final solution, this step allows us to gradually introduce all the required modifications.
 Then we show to process all highly periodic patterns, and finally move to the general case, where patters are not highly periodic.

\paragraph{\textbf{Patterns with a non-highly periodic prefix.}} 
We maintain a tentative longest prefix occurring in $T$ for every pattern $P_j$, denoted $P_j[1..\ell_j]$
and initialized with $\ell_j=\ell$, and proceed as in Algorithm~\ref{alg:proc} with the following
modifications. In line~\ref{ln:add}, the new request is $(i+\ell_j-\ell+1,j)$. In line~\ref{ln:check},
we compare $h$ with $\Phi(P_j[(\ell_j+2-\ell)..(\ell_j+1)])$. If these two fingerprints are equal,
we have found an occurrence of $P_j[1..\ell_j+1]$. In such case we try to further extend the occurrence
by naively comparing $P_j[\ell_j+1..|P_j|]$ with the corresponding fragment of $T$
and incrementing $\ell_j$ as long as the corresponding characters match.
For every $P_j$ we also need to maintain $\Phi(P_j[(\ell_j+2-\ell)..(\ell_j+1)])$, which can be 
first initialized in $\Oh(\ell)$ time and then updated in $\Oh(1)$ time whenever $\ell_j$ is incremented.
Because at any time at most one request is pending for every pattern $P_j$ (and thus, while updating $\ell_j$ no such request is pending), this modified algorithm correctly determines the longest occurring prefix for every pattern with non-highly periodic $\alpha_j$.

\begin{lemma}\label{lem:nhp3}
Let $\ell$ be an arbitrary integer.
Suppose we are given a text $T$ of length $n$ and patterns $P_1,\ldots,P_s$ such that, for \(1 \leq j \leq s\), we have $\ell \le |P_j| < \frac43 \ell$ and $\alpha_j = P_j[1..\ell]$ is not highly periodic.
We can compute the longest prefix occurring in $T$ for every pattern $P_j$
using $\Oh(n+s(1+\frac{n}{\ell})\log s+s\ell)$ total time using $\Oh(s)$ space.
\end{lemma}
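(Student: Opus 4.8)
The plan is to reuse Algorithm~\ref{alg:proc} from the proof of Lemma~\ref{lem:nhp} essentially unchanged, replacing its fixed verification target $\beta_j$ by a \emph{growing} one. For every pattern $P_j$ I would maintain a tentative longest occurring prefix $P_j[1..\ell_j]$, with $\ell_j$ initialised to $\ell$, together with the fingerprint $\Phi(P_j[(\ell_j+2-\ell)..(\ell_j+1)])$ of the length-$\ell$ fragment of $P_j$ ending the current candidate; the latter is computed once in $\Oh(\ell)$ time per pattern and refreshed in $\Oh(1)$ time whenever $\ell_j$ grows. As before I precompute $\Phi(\alpha_j)$ for all $j$ and store the map $\Phi(\alpha_j)\mapsto j$ in a deterministic dictionary. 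Scanning $T$ with a length-$\ell$ window and maintaining its fingerprint, whenever this fingerprint equals some $\Phi(\alpha_j)$ I enqueue the request $(i+\ell_j-\ell+1,j)$ into $\mathcal{Q}$; when a request $(i,j)$ falls due I compare the current window fingerprint against $\Phi(P_j[(\ell_j+2-\ell)..(\ell_j+1)])$, and on a match --- which, absent false positives, certifies that $P_j[1..\ell_j+1]$ occurs at position $i-(\ell_j+1-\ell)$ --- I increment $\ell_j$ and then keep extending it greedily by naive character comparisons of $P_j$ against $T$, updating the stored fingerprint at each step.

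For correctness I would first re-establish the invariant of Lemma~\ref{lem:nhp}: at most one request per pattern is ever pending. Since $\alpha_j$ is not highly periodic, its occurrences in $T$ lie at least $\frac{1}{3}\ell$ positions apart, whereas a request for $P_j$ waits only $\ell_j+1-\ell\le |P_j|-\ell<\frac{1}{3}\ell$ iterations; hence $\ell_j$ never changes while a request for $j$ is in the queue, so the due position stored in the request and the fragment it will be checked against remain mutually consistent. The two length-$\ell$ equalities involved in a successful test --- the one recording the occurrence of $\alpha_j$ and the one checked when the request falls due --- then cover the index range $[1,\ell_j+1]$ without a gap, because the window length $\ell$ exceeds $\frac{1}{2}(\ell_j+1)$ (as $\ell_j+1<\frac{4}{3}\ell$), so together they force $T[\,\cdot\,]=P_j[1..\ell_j+1]$. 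Conversely, any prefix $P_j[1..k]$ with $k>\ell$ that occurs in $T$ begins with an occurrence of $\alpha_j$, so the scan eventually processes that occurrence and drives $\ell_j$ strictly upward unless it has already reached the maximum; and since the greedy extension only ever sets $\ell_j$ to the length of a prefix of $P_j$ that genuinely occurs in $T$, it never overshoots, so on termination each $\ell_j$ equals the length of the longest prefix of $P_j$ occurring in $T$.

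For the running time, the number of requests per pattern is $\Oh(1+\frac{n}{\ell})$ --- one per occurrence of $\alpha_j$, and these are $\ge\frac{1}{3}\ell$ apart --- so $|\mathcal{Q}|=\Oh(s)$ throughout and each of the $\Oh(s(1+\frac{n}{\ell}))$ heap operations costs $\Oh(\log s)$; maintaining the window fingerprint over all of $T$ costs $\Oh(n)$; the greedy extensions perform at most $\sum_j(|P_j|-\ell)\le m=\Oh(s\ell)$ successful character comparisons in total (using $|P_j|<\frac{4}{3}\ell$), plus one failed comparison charged to each request; and the preprocessing (the fingerprints $\Phi(\alpha_j)$, the deterministic dictionary, and the initial fingerprints $\Phi(P_j[2..\ell+1])$) costs $\Oh(s\ell+s\log s)$. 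Summing, the running time is $\Oh(n+s(1+\frac{n}{\ell})\log s+s\ell)$, and the working space is $\Oh(s)$: the dictionary, the heap, and $\Oh(1)$ words per pattern ($\ell_j$ and one fingerprint). I expect the only genuinely delicate point to be verifying that the one-pending-request-per-pattern invariant survives the introduction of the growing $\ell_j$, since that is precisely what keeps the request positions well-defined, the heap of size $\Oh(s)$, and the incremental fingerprint bookkeeping coherent; once it is in hand, the rest follows the template of Lemma~\ref{lem:nhp}.
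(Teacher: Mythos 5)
Your proof is correct and follows essentially the same route as the paper: modify Algorithm~\ref{alg:proc} so that each request targets the growing fingerprint $\Phi(P_j[(\ell_j+2-\ell)..(\ell_j+1)])$ instead of the fixed $\Phi(\beta_j)$, extend greedily on success, and use the non-high-periodicity of $\alpha_j$ to keep the one-pending-request-per-pattern invariant, which in turn makes the $\ell_j$ updates well-defined. Your added observation that the two length-$\ell$ fingerprint checks overlap (since $\ell_j+1<\frac{4}{3}\ell<2\ell$) and thus jointly certify an occurrence of $P_j[1..\ell_j+1]$ is a welcome bit of extra rigor the paper leaves implicit, but it is the same argument.
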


\paragraph{\textbf{Highly periodic patterns.}}

As in Sect.~\ref{sec:periodic}, we observe that all shiftable occurrences of the longest prefix of
$P_j$ occurring in $T$ can be ignored, and therefore it is enough to consider only non-shiftable occurrences
of $\alpha_j$ (by the same argument, because that longest prefix is of length at least $|\alpha_j|$).
Therefore, we can again use Algorithm~\ref{alg:proc} with the same modifications.
As for non-highly periodic $\alpha_j$, we maintain a tentative longest prefix $P_j[1..\ell_j]$
for every pattern $P_j$. Whenever a non-shiftable occurrence of $\alpha_j$ is detected, 
we create a new request to check if $\ell_j$ can be incremented. If so, we start to naively
compare $P_j[1..\ell_j+1]$ with the corresponding fragment of $T$. The total time and space
complexity remain unchanged.

\begin{lemma}\label{lem:hp2}
Let $\ell$ be an arbitrary integer.
Given a text $T$ of length $n$ and a collection of highly periodic patterns $P_1,\ldots,P_s$ such that,
for $1\leq j\leq s$, we have
$\ell \le |P_j| < \frac43 \ell$, we can compute the longest prefix occurring in $T$ for
every pattern $P_j$ using $\Oh(n+s(1+\frac{n}{\ell})\log s+s\ell)$ total time and $\Oh(s)$ space.
\end{lemma}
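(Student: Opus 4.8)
The plan is to fuse the two extensions already in hand: the longest-occurring-prefix bookkeeping from Lemma~\ref{lem:nhp3} and the shiftable-occurrence filter from Section~\ref{sec:periodic}. Concretely, I would keep, for every pattern $P_j$, a tentative prefix length $\ell_j$ together with the fingerprint $\Phi(P_j[(\ell_j+2-\ell)..(\ell_j+1)])$, both updatable in $\Oh(1)$ time whenever $\ell_j$ is incremented; $\ell_j$ is initialized to $\ell$, which is safe because the group-preprocessing guarantees that $\alpha_j=P_j[1..\ell]$ occurs in $T$. I would also build, exactly as in Section~\ref{sec:periodic}, the deterministic dictionary from $\Phi(\alpha_j)$ to the list of indices and a second dictionary from $\Phi(\alpha_j)$ to a cell holding the position of the last seen occurrence of $\alpha_j$, shared among all patterns with the same prefix. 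Then I would run Algorithm~\ref{alg:proc} with: in lines~\ref{ln:dict}--\ref{ln:add}, using the shared cell to discard the occurrence of $\alpha_j$ at $i$ whenever it is shiftable (i.e.\ $\alpha_j$ also occurred at $i-\per(\alpha_j)$) and otherwise adding the request $(i+\ell_j-\ell+1,j)$; and in line~\ref{ln:check}, comparing $h$ with $\Phi(P_j[(\ell_j+2-\ell)..(\ell_j+1)])$ and, on a match, extending $\ell_j$ by naive character-by-character comparison of $P_j[\ell_j+1..|P_j|]$ against $T$.

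For correctness, let $P_j[1..\ell_j^\star]$ be the longest prefix of $P_j$ occurring in $T$; then $\ell_j^\star\ge\ell$ and the leftmost occurrence of $P_j[1..\ell_j^\star]$, at some position $i^\star$, is non-shiftable. Since $2\per(P_j[1..\ell_j^\star])\le 2\per(P_j)\le\tfrac23|P_j|<\ell=|\alpha_j|$, the fact on non-shiftable prefixes from Section~\ref{sec:periodic} shows that $\alpha_j$ has a non-shiftable occurrence at $i^\star$ as well (with the same period), so the algorithm does create a request at $i^\star$. Maintaining the invariant $\ell\le\ell_j\le\ell_j^\star$ (the upper bound holds because $\ell_j$ is raised only after a fingerprint-verified occurrence), and using the fact that at most one request per pattern is pending at any time — so $\ell_j$ does not change between a request's creation and its handling — the fingerprint test at $i^\star$ succeeds and the naive extension pushes $\ell_j$ up to exactly $\ell_j^\star$. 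Since the same holds for every pattern, the algorithm returns the correct longest occurring prefix for all of them.

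The complexity is inherited from Lemma~\ref{lem:hp} essentially verbatim: preprocessing is $\Oh(s\ell+s\log s)$ time and $\Oh(s)$ space; by the periodicity-lemma argument of Section~\ref{sec:periodic} the non-shiftable occurrences of each $\alpha_j$ are at least $\tfrac12\ell$ apart, so $\Oh(s(1+\tfrac n\ell))$ requests are created in total and $\Oh(s)$ are ever pending, contributing $\Oh(n+s(1+\tfrac n\ell)\log s)$; and the naive extensions cost $\Oh(\sum_j(\ell_j^\star-\ell))=\Oh(s\ell)$ extra since each $P_j$ is extended monotonically and $|P_j|<\tfrac43\ell$. I expect the only genuinely delicate point to be the interaction of the two dictionaries when many patterns share the same $\alpha_j$: the shiftability test must remain $\Oh(1)$ per occurrence (achieved via the shared cell, as in Section~\ref{sec:periodic}), and the per-pattern request/extension machinery must be arranged so that it never violates the ``at most one pending request per pattern'' property, on which the correctness of the naive extension rests.
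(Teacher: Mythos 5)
Your proposal is correct and follows essentially the same route as the paper: it fuses the shiftable-occurrence filter of Section~4.2 with the tentative-$\ell_j$ bookkeeping of Lemma~\ref{lem:nhp3}, justifies ignoring shiftable occurrences via the non-shiftable-prefix fact (using $2\per(P_j[1..\ell_j^\star])\le 2\per(P_j)<\ell$), and bounds the extra naive-extension work by $\Oh(s\ell)$. The paper's own proof is terser but relies on exactly these observations.
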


\paragraph{\textbf{General case.}}

Now we describe how to process all non-highly periodic patterns $P_j$. This will be an extension
of the simple modification described for the case of non-highly periodic prefix $\alpha_j$. 
We start with the following simple combinatorial fact.

\begin{fact}
\label{fct:break}
Let $\ell \geq 3$ be an integer and $w$ be a non-highly periodic word of length at least $\ell$.
Then there exists $i$ such that $w[i..i+\ell-1]$ is not highly periodic and either $i=1$ or $w[1..i+\ell-2]$
is highly periodic. Furthermore, such $i$ can be found in $\Oh(|w|)$ time and constant space
assuming read-only random access to $w$.
\end{fact}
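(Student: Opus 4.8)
The plan is to define $i$ as the position determined by the longest highly periodic prefix of $w$, and then verify both the combinatorial claim and the claimed time/space bounds. Concretely, first I would check whether $w[1..\ell]$ itself is highly periodic; if not, then $i=1$ works trivially. Otherwise, let $P=w[1..j]$ be the \emph{longest} highly periodic prefix of $w$ of length $j\ge \ell$; since $w$ is not highly periodic and has length at least $\ell$, such a maximal $j$ exists and satisfies $\ell\le j<|w|$. Set $i=j-\ell+2$, so that $w[1..i+\ell-2]=w[1..j]=P$ is highly periodic by construction. It remains to argue that $w[i..i+\ell-1]=w[j-\ell+2..j+1]$ is \emph{not} highly periodic. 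Suppose for contradiction that it is, with period $q\le \tfrac{\ell}{3}$. The prefix $P=w[1..j]$ has some period $p=\per(P)\le\tfrac{j}{3}$. Now consider the prefix $w[1..j+1]$: both $w[1..j]$ and $w[j-\ell+2..j+1]$ are fragments of it of lengths $j$ and $\ell$ respectively, overlapping in a region of length $\ell-1$. The hard part is the overlap argument; I would apply Fact~\ref{fct:shr} (or directly the periodicity lemma on the overlap $w[j-\ell+2..j]$, which has length $\ell-1\ge 2\max(p,q)$ once one checks $p,q\le \ell/3$ and $\ell-1\ge 2\ell/3$ for $\ell\ge3$) to conclude $\gcd(p,q)$, hence $\min(p,q)$, is a common period of the union $w[1..j+1]$, so in particular $w[1..j+1]$ is highly periodic — contradicting the maximality of $j$.

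Next I would handle the algorithmic claim. Using Lemma~\ref{lem:comper} we can, for a given prefix length $k$, decide in $\Oh(k)$ time and constant space whether $w[1..k]$ is highly periodic (run the periodicity test and compare $\per$ against $k/3$; note Lemma~\ref{lem:comper} as stated detects periodicity, i.e.\ $\per\le k/2$, but the same linear scan recovers $\per(w[1..k])$ exactly whenever $\per\le k/2$, which covers the highly periodic case). Naively we want the largest $j$ with $w[1..j]$ highly periodic, but we cannot afford to test every $k$. The key observation is that if $w[1..\ell]$ is highly periodic with shortest period $\rho$ of length $p\le\ell/3$, then any highly periodic prefix $w[1..k]$ with $k\ge\ell$ must have period exactly $p$ (again by the periodicity lemma on the overlap with $w[1..\ell]$), i.e.\ it is a prefix of $\rho^\infty$. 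Hence the maximal highly periodic prefix length $j$ is simply the length of the longest prefix of $w$ that agrees with $\rho^\infty$, truncated so that $p\le j/3$; this is just one left-to-right scan comparing $w[k]$ with $w[((k-1)\bmod p)+1]$ until the first mismatch (or until $k=|w|$), taking $\Oh(|w|)$ time and constant space. Then $i=j-\ell+2$ (or $i=1$ if the mismatch occurs before position $\ell$, i.e.\ if $w[1..\ell]$ is not highly periodic).

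I expect the main obstacle to be the boundary bookkeeping in the overlap argument — making sure the lengths genuinely satisfy the hypothesis $\ell-1\ge p+q$ (needed to invoke the periodicity lemma / Fact~\ref{fct:shr}) from $p,q\le\ell/3$, which requires $\ell-1\ge 2\ell/3$, i.e.\ $\ell\ge3$, exactly the stated hypothesis — and the edge cases where $j+1>|w|$ (in which case maximality of $j$ is witnessed by $w$ itself being non-highly periodic) or where $w[1..\ell]$ is already non-highly periodic (handled by $i=1$). Once these cases are cleanly separated, both the combinatorial statement and the $\Oh(|w|)$-time constant-space construction follow routinely.
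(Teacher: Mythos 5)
There is a genuine gap, and it traces to your choice of $j$. You take $j$ to be the \emph{longest highly periodic prefix} of $w$, and then in the overlap argument you assert ``once one checks $p,q\le\ell/3$.'' But your $j$ only guarantees $p=\per(w[1..j])\le j/3$, and $j$ can be much larger than $\ell$, so $p$ can far exceed $\ell/3$ (even exceed $\ell$). Concretely take $\ell=6$ and $w=(ababab\,c)^3\,d$. Then $w[1..6]=ababab$ is highly periodic with shortest period $2$, yet $w[1..21]$ is highly periodic with shortest period $7=21/3$, so your $j=21$ and $p=7>\ell/3$. This also falsifies the ``key observation'' on which your algorithm's correctness rests: a highly periodic prefix of length $\ge\ell$ need \emph{not} inherit the period of $w[1..\ell]$. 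As a consequence, your linear scan along $\rho^\infty$ does \emph{not} compute the longest highly periodic prefix you defined combinatorially (it would return $6$, not $21$, in the example), so the algorithm and the combinatorial argument are about two different indices.

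The paper avoids exactly this trap by choosing $j$ differently: $j$ is the largest index with $\per(w[1..j])=\per(w[1..\ell])$, i.e.\ the length of the longest prefix of $\rho^\infty$. That definition forces $\per(w[1..j])\le\ell/3$ by construction, so the overlap $w[j+2-\ell..j]$ of length $\ell-1\ge 2\ell/3$ does satisfy the hypothesis of the periodicity lemma, and the contradiction (via Fact~\ref{fct:shr}) goes through. It also matches the scan-along-$\rho^\infty$ algorithm you describe, so the algorithmic part of your write-up is essentially the paper's --- it just proves a property of a different index than the one it claims to. Interestingly, this means your algorithm is correct, but your justification is not: to fix the proof you should redefine $j$ as the paper does (longest prefix of $\rho^\infty$), observe that then $w[1..j]$ is still highly periodic and $\per(w[1..j])\le\ell/3$, and run the overlap argument with that bound; the ``longest highly periodic prefix'' characterization should be dropped entirely.

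Two smaller issues worth flagging: $\gcd(p,q)$ being a period of the overlap does not immediately give $\min(p,q)$ as a common period of $w[1..j]$ and $w[j-\ell+2..j+1]$ --- one needs the overlap to be long enough ($\ge p+\gcd$ and $\ge q+\gcd$) to propagate the gcd period outward before invoking Fact~\ref{fct:shr}; and the conclusion you want is that $w[1..j+1]$ has period $\le\ell/3$, which with $j+1>\ell$ makes it highly periodic and contradicts $\per(w[1..j+1])>\ell/3$ (the relevant maximality statement), rather than contradicting ``maximality of the highly periodic prefix.'' Both of these are handled cleanly once $j$ is chosen as in the paper.
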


\begin{proof}
If $\per(w[1..\ell]) > \frac{1}{3}\ell$, we are done. Otherwise, choose largest $j$ such that
$\per(w[1..j]) = \per(w[1..\ell])$. Since $w$ is not highly periodic, we have $j<|w|$. 
Thus $\per(w[1..j]) \leq \frac{1}{3}\ell$ but $\per(w[1..j+1]) > \frac{1}{3}\ell$.
We claim that $i=j+2-\ell$ can be returned. We must argue that $\per(w[j+2-\ell..j+1]) > \frac{1}{3}\ell$.
Otherwise, the periods of both $w[1..j]$ and $w[j+2-\ell..j+1]$ are at most $\frac{1}{3}\ell$.
But these two substrings share a fragment of length $\ell-1 \geq \frac{2}{3}\ell$, so by the periodicity
lemma their periods are in fact the same, and then the whole $w[1..j+1]$ has period at most $\frac{1}{3}\ell$,
which is a contradiction.

Regarding the implementation, we compute $\per(w[1..\ell])$ using Lemma~\ref{lem:comper}.
Then we check how far the period of $w[1..\ell]$ extends in the whole $w$ naively in $\Oh(|w|)$
time and constant space.
\end{proof}

For every non-highly periodic pattern $P_j$ we use Fact~\ref{fct:break} to find its non-highly
periodic substring of length $\ell$, denoted $P_j[k_j...k_j+\ell-1]$, such that $k_j=1$ or
$P_j[1..k_j+\ell-2]$ is highly periodic. We begin with checking if $P_j[1..k_j+\ell-1]$
occurs in $T$ using Lemma~\ref{lem:nhp} (if $k_j=1$, it surely does because of how we partition
the patterns into groups).
If not, we replace $P_j$ setting $P_j:= P_j[1..k_j+\ell-2]$, which is highly periodic and can be processed as already described. 
From now on we consider only patterns $P_j$ such that $P_j[k_j..k_j+\ell-1]$
is not highly periodic and $P_j[1..k_j+\ell-1]$ occurs in $T$.

\newcommand{\dummy}{P_j[k_j..k_j+\ell\!-\!1]}

\DontPrintSemicolon
\begin{algorithm}[hbt]
 \For{$j = 1$ \KwSty{to} $s$}{
   $\ell_j := k_j+\ell-1$\;
 }
 \For{$i = 1-\frac{1}{3}\ell$ \KwSty{to} $n-\ell+1$}{
 	$h_1 := \Phi(w[i+\frac{1}{3}\ell..i+\frac{4}{3}\ell-1])$\;
	$h_2 := \Phi(w[i+\ell..i+\ell-1])$\;
	\ForEach{$j : \Phi(P_j[k_j..k_j+\ell-1])= h_1$}{
		add a request $(i+\frac{1}{3}\ell-k_j+1, j)$ to $\mathcal{Q}_1$
	}
	\ForEach{request $(i,j)\in \mathcal{Q}_1$ at position $i$}{
		\If{$h_2 = \Phi(P[1..\ell])$}{
  		  add a request $(i-\ell+\ell_j+1, j)$ to $\mathcal{Q}_2$\;
		}
		remove $(i,j)$ from $\mathcal{Q}_1$\;
	}
	\ForEach{request $(i-\ell,j)\in \mathcal{Q}_2$ at position $i$}{
		\If{$h_2 = \Phi(P_j[\ell_j+2-\ell..\ell_j+1])$}{
                    $o_j := i+\ell -\ell_j-1$\;
                    increment $\ell_j$ as long as $P_j[\ell_j+1]=T[o_j+\ell_j]$\;
		}
		remove $(i,j)$ from $\mathcal{Q}_2$\;
	}

 }
\caption{Processing patterns with non-highly periodic $\dummy$.
}\label{alg:proc2}
\end{algorithm}

We further modify Algorithm~\ref{alg:proc} to obtain Algorithm~\ref{alg:proc2} as follows. We scan
the text $T$ with a sliding window of length $\ell$ while maintaining a tentative longest occurring
prefix $P_j[1..\ell_j]$ for every pattern $P_j$, initialized by setting $\ell_j=k_j+\ell-1$.
Whenever we encounter a substring equal to
$P_j[k_j..k_j+\ell-1]$, i.e., $P_j[k_j..k_j+\ell-1]=T[i..i+\ell-1]$, we want to check if
$P_j[1..k_j+\ell-1]=T[i-k_j+1..i+\ell-1]$ by comparing fingerprints
of $\alpha_j=P_j[1..\ell]$ and the corresponding fragment of $T$. This is not trivial as that fragment
is already to the left of the current window. Hence we conceptually move \emph{two} sliding windows of
length $\ell$, corresponding to $w[i+\frac{1}{3}\ell..i+\frac{4}{3}\ell-1]$ and $w[i..i+\ell-1]$, respectively.
Because $k_j \leq |P_j|-\ell \le \frac{1}{3}\ell$, whenever the first window generates a request
(called request of type I), the second one is still far enough to the left for the request to 
be processed in the future. Furthermore, because $P_j[k_j..k_j+\ell-1]$ is non-highly periodic and the
distance between the sliding windows is $\frac{1}{3}\ell$, each pattern $P_j$ contributes at most one pending request of type I
at any moment and $\Oh(1+\frac{n}{\ell})$ such requests in total. 
Then, whenever a request of type I is successfully processed, we know that $P_j[1..k_j+\ell-1]$ matches with the corresponding fragment of $T$.
We want to check if the occurrence of $P_j[1..k_j+\ell-1]$ can be extended to an occurrence
of $P_j[1..\ell_j+1]$. To this end, we create another request (called request of type II) to check if $P_j[\ell_j+2-\ell..\ell_j+1]$ matches the corresponding fragment of $T$. 
This request can be processed using the second window and, again because $\per(P_j[1..k_j+\ell-1])>\frac{1}{3}\ell$, each pattern contribues at most one pending request of type II at any moment 
and $\Oh((1+\frac{n}{\ell}))$ such request in total.
Finally, whenever a request of type II
is successfully processed, we know that the corresponding $\ell_j$ can be incremented.
Therefore, we start to naively compare the characters of $P_j[\ell_j+1..|P_j]]$ and the corresponding fragment
of $T$. 
Since at that time no other request of type II is pending for $P_j$, such modified algorithm correctly computes all values $\ell_j$ (and the corresponding positions $o_j$).

\begin{lemma}\label{lem:nhp2}
Let $\ell$ be an arbitrary integer.
Suppose we are given a text $T$ of length $n$ and patterns $P_1,\ldots,P_s$ such that, for \(1 \leq j \leq s\), we have $\ell \le |P_j| < \frac43 \ell$ and $P_j$ is non-highly periodic.
We can compute the longest prefix occurring in $T$ for every pattern $P_j$
in $\Oh(n+s(1+\frac{n}{\ell})\log s+s\ell)$ total time using $\Oh(s)$ space.
\end{lemma}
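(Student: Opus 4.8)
The plan is to establish correctness and the resource bounds of the procedure outlined above (Algorithm~\ref{alg:proc2}); most of the combinatorial machinery is already in place, so the proof is mainly a matter of assembling it.

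I would start with preprocessing. For every non-highly periodic $P_j$ (so $\ell\le|P_j|<\tfrac43\ell$), Fact~\ref{fct:break} yields in $\Oh(|P_j|)=\Oh(\ell)$ time and $\Oh(1)$ space an index $k_j$ with $P_j[k_j..k_j+\ell-1]$ not highly periodic and either $k_j=1$ or $P_j[1..k_j+\ell-2]$ highly periodic; note $k_j\le|P_j|-\ell+1\le\tfrac13\ell$. I then test whether $P_j[1..k_j+\ell-1]$ occurs in $T$: since $P_j[k_j..k_j+\ell-1]$ is its non-highly periodic suffix and $\ell\le k_j+\ell-1<\tfrac43\ell$, this is precisely the setting of the suffix version of Lemma~\ref{lem:nhp} (run on $T$ and the strings $P_j[1..k_j+\ell-1]$), at cost $\Oh(n+s(1+\tfrac{n}{\ell})\log s+s\ell)$. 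A pattern failing this test must have $k_j>1$ (for $k_j=1$ the test passes by the group invariant that $P_j[1..g_i]=P_j[1..\ell]$ occurs in $T$), so $P_j[1..k_j+\ell-2]$ is highly periodic of length in $[\ell,\tfrac43\ell)$; its longest prefix occurring in $T$ equals that of $P_j$, and it is handed to Lemma~\ref{lem:hp2} at the same cost. For the surviving patterns I precompute $\Phi(P_j[k_j..k_j+\ell-1])$ and $\Phi(P_j[1..\ell])$ and a deterministic dictionary keyed by $\Phi(P_j[k_j..k_j+\ell-1])$, in $\Oh(s\ell+s\log s)$ time and $\Oh(s)$ space.

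Next I would run Algorithm~\ref{alg:proc2} on the survivors and prove it correct. The key facts are: since $P_j[k_j..k_j+\ell-1]$ is not highly periodic, two occurrences of it in $T$ are more than $\tfrac13\ell$ apart (else the periodicity lemma forces period $\le\tfrac13\ell$), giving $\Oh(1+\tfrac{n}{\ell})$ type-I requests per pattern and at most one pending at a time; and $\per(P_j[1..k_j+\ell-1])>\tfrac13\ell$ (a shorter period would also be a period of the length-$\ell$ window $P_j[k_j..k_j+\ell-1]$), giving the same bounds for type-II requests. Because the window offset is exactly $\tfrac13\ell$ and $k_j\le\tfrac13\ell$, a type-I request is resolved by the lagging window before the next is created, and likewise for type-II; hence no request ever refers to a stale value of $\ell_j$, which is what makes the interleaving of fingerprint checks and naive extensions consistent. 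For the final value of $\ell_j$: it is monotone and only ever grows through naive character matches against $T$, so $\ell_j\le L_j$ throughout, where $L_j$ is the true length of the longest prefix of $P_j$ occurring in $T$; conversely, if $\ell_j=c<L_j$ at some moment, letting $p$ be the leftmost occurrence of $P_j[1..c+1]$ one checks that the type-I request at $p$ (spawned from the occurrence of $P_j[k_j..k_j+\ell-1]$ at $p+k_j-1$) is confirmed, the induced type-II request is confirmed, and the subsequent extension raises $\ell_j$ above $c$. Thus $\ell_j$ converges to $L_j$, and $o_j$ records a valid occurrence.

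For the complexity of the scan: it makes $\Oh(n+\ell)$ iterations, each costing $\Oh(1)$ for fingerprint maintenance (with $x^{-1}$ precomputed) and dictionary lookups; each of the $\Oh(s(1+\tfrac{n}{\ell}))$ requests costs $\Oh(1)$ plus $\Oh(\log s)$ for heap operations (the heaps hold $\Oh(s)$ elements), and the naive extensions for each $P_j$ total $\Oh(|P_j|)=\Oh(\ell)$ since $\ell_j$ is monotone and bounded by $|P_j|$. Together with the preprocessing and the two auxiliary invocations this gives $\Oh(n+s(1+\tfrac{n}{\ell})\log s+s\ell)$ time and $\Oh(s)$ space. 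I expect the main obstacle to be the correctness bookkeeping — aligning the two window offsets, the two request keys, and the evolving $\ell_j$, and nailing down the ``no stale request'' property — rather than the complexity count, which mirrors that of Lemmas~\ref{lem:nhp} and~\ref{lem:hp}.
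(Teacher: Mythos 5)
Your proposal is correct and follows essentially the same route the paper takes: Fact~\ref{fct:break} to locate $k_j$, the preprocessing check of $P_j[1..k_j+\ell-1]$ via the suffix version of Lemma~\ref{lem:nhp} with fallback to Lemma~\ref{lem:hp2}, and then the two-window scan of Algorithm~\ref{alg:proc2} with the same bounds on type-I/type-II requests derived from the non-high-periodicity of $P_j[k_j..k_j+\ell-1]$ and of $P_j[1..k_j+\ell-1]$. The pieces you fill in — the $>\frac13\ell$ separation of occurrences, the fact that a period $\le\frac13\ell$ of $P_j[1..k_j+\ell-1]$ would contradict $P_j[k_j..k_j+\ell-1]$ being non-highly periodic, and the ``no stale request'' timing argument — are exactly the ones the paper invokes, just spelled out a bit more.
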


By combining all the ingredients, we get the following theorem.

\begin{theorem}\label{thm:final2}
Given a text $T$ of length $n$ and patterns $P_1,\ldots,P_s$ of total length $m$,
we can compute
the longest prefix occurring in $T$ for every pattern $P_j$ using $\Oh(n\log n
+m)$ total time and $\Oh(s)$ space.
\end{theorem}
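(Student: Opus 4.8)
The plan is to combine the short-pattern machinery of Theorem~\ref{thm:short2} with the long-pattern machinery assembled from Lemmas~\ref{lem:nhp2} and~\ref{lem:hp2}, exactly mirroring how Theorem~\ref{thm:final} was obtained from Theorems~\ref{thm:short} and~\ref{thm:long}. First I would split the patterns by length at the threshold $\ell = \min(n,s)$. If $|P_j| > n$, the longest prefix of $P_j$ occurring in $T$ is certainly shorter than $n$, so we may safely truncate every such pattern to its length-$n$ prefix before doing anything else; this does not change any answer and keeps the total length bounded by $\Oh(n+m)$. For all patterns with (truncated) length at most $\ell=\min(n,s)$ we invoke Theorem~\ref{thm:short2}, which runs in $\Oh(n\log\ell + s\frac{n}{\ell}+m) = \Oh(n\log n + s + m) = \Oh(n\log n + m)$ time (using $\ell = \min(n,s) \le s$, so $s\frac n\ell \le n$ when $\ell = s$, and trivially when $\ell = n$) and $\Oh(s+\ell)=\Oh(s)$ space.

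For the long patterns — those of length at least $s$, which only arise when $s \le n$ — I would follow the ``Long patterns'' development in Section~\ref{sec:prefixes}. Distribute them into the $\Oh(\log n)$ groups indexed by $i = \floor{\log_{4/3}|P_j|}$, set $g_i = \ceil{(4/3)^i}$, and run the preprocessing phase that processes groups in decreasing order of $i$, applying Theorem~\ref{thm:equal} to each prefix $P_j[1..g_i]$ and, on failure, truncating $P_j$ to $P_j[1..g_i-1]$ and demoting it to a lower group (or to the short-pattern case, where it is absorbed by Theorem~\ref{thm:short2}). Since each demotion shrinks $|P_j|$ by a constant factor, the total length of all prefixes fed to Theorem~\ref{thm:equal} over the whole process is $\Oh(m)$, and since $\ell = s \ge \log s$ the preprocessing costs $\Oh(n\log n + m)$ in total. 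After this, every surviving long pattern in group $i$ satisfies $g_i \le |P_j| < \frac43 g_i$ and has $P_j[1..g_i]$ occurring in $T$.

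Within each group I would split patterns into the highly periodic ones, handled by Lemma~\ref{lem:hp2}, and the non-highly-periodic ones, handled by Lemma~\ref{lem:nhp2} (using Fact~\ref{fct:break} to locate the non-highly-periodic window $P_j[k_j..k_j+\ell-1]$ and Lemma~\ref{lem:nhp} — via Algorithm~\ref{alg:proc2} — to verify the required prefix occurs, truncating to a highly periodic pattern otherwise); the classification itself costs $\Oh(m)$ time and $\Oh(s)$ space via Lemmas~\ref{lem:comper} and~\ref{fct:break}. Each such call costs $\Oh(n)$ plus $\Oh(|P_j| + (1+\frac{n}{|P_j|})\log s)$ per pattern; summing over the $\Oh(\log n)$ groups and using $|P_j| \ge \ell = s$ gives $\Oh(n\log n + m + s\frac{n}{\ell}\log s) = \Oh(n\log n + m + n\log s) = \Oh(n\log n + m)$, and the space stays $\Oh(s)$. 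Adding the short- and long-pattern bounds yields the claim. The main obstacle, and the reason the short-pattern threshold is $\min(n,s)$ rather than something fixed, is the $s\frac{n}{\ell}$ term: choosing $\ell$ too small makes the short-pattern cost blow up, while choosing it too large makes $\Oh(\ell)$ space exceed $\Oh(s)$ and the per-group long-pattern work too heavy — balancing at $\ell = \min(n,s)$ is exactly what makes every term collapse to $\Oh(n\log n + m)$. A secondary subtlety is ensuring the preprocessing-by-truncation in the long-pattern phase interacts correctly with the group structure (a truncated pattern must land in a strictly smaller group so the geometric-sum argument on total prefix length goes through), which is handled by the decreasing-$i$ processing order.
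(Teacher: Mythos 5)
Your proposal is correct and follows essentially the same route as the paper: split at threshold $\min(n,s)$, use Theorem~\ref{thm:short2} for short patterns and the group-decomposition plus Lemmas~\ref{lem:hp2} and~\ref{lem:nhp2} for long ones, with the extra $\Oh(n\log n+m)$ preprocessing (applying Theorem~\ref{thm:equal} to $P_j[1..g_i]$ and demoting on failure) absorbed into the final bound. The only nit is a small mislabel — Lemma~\ref{lem:nhp} is realized by Algorithm~\ref{alg:proc} (on reversed strings, since the non-highly-periodic window is a suffix of $P_j[1..k_j+\ell-1]$), while Algorithm~\ref{alg:proc2} belongs to Lemma~\ref{lem:nhp2} — but this does not affect the argument.
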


\begin{proof}
We proceed as in Thm.~\ref{thm:long} and~\ref{thm:final}, except that now we use
Thm.~\ref{thm:short2}, Lemma~\ref{lem:hp2} and Lemma~\ref{lem:nhp2} instead of
Thm.~\ref{thm:short}, Lemma~\ref{lem:hp} and Lemma~\ref{lem:nhp}, respectively.
Additionally, we need $\Oh(n\log n+m)$ time to distribute the long patterns
into groups, which is absorbed in the final complexity.
\end{proof}

Finally, let us note that it is straightforward to modify the algorithm so that we can specify
for every pattern $P_j$ an upper bound $r_j$ on the starting positions of the occurrences.

\begin{theorem}\label{thm:gen}
Given a text $T$ of length $n$, patterns $P_1,\ldots,P_s$ of total length $m$, and integers $r_1,\ldots,r_s$,
we can compute for each pattern the maximum length $\ell_j$ and a position $o_j\le r_j$ such that $P_j[1..\ell_j]=T[o_j..(o_j+\ell_j-1)]$,
using $\Oh(n\log n+m)$ total time and $\Oh(s)$ space.
\end{theorem}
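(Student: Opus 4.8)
The plan is to revisit each of the three components developed for Theorem~\ref{thm:final2} --- short patterns (Theorem~\ref{thm:short2}), highly periodic patterns (Lemma~\ref{lem:hp2}), and general non-highly-periodic patterns (Lemma~\ref{lem:nhp2}) --- and show that each can be augmented to respect a per-pattern upper bound $r_j$ on the starting position of the reported occurrence, without changing the asymptotic time or space. The key observation is that in every one of these algorithms the occurrences are discovered by a left-to-right scan of $T$, and the value $\ell_j$ is only ever \emph{increased}, either when a fresh occurrence of the relevant prefix/anchor is detected or when an already-found occurrence is extended character by character. So it suffices to forbid updates that would be based on an occurrence starting beyond position $r_j$.

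Concretely, first I would handle short patterns: in the block-based algorithm of Theorem~\ref{thm:short2}, when processing block $T_i = T[(i-1)\ell+1 \mathinner{\ldotp\ldotp} (i-1)\ell+2\ell]$ we obtain for each $P_j$ a candidate occurrence $T_i[b \mathinner{\ldotp\ldotp} e]$; before using it to update $\ell_j$ (and before the naive extension), we simply check that the absolute starting position of that occurrence in $T$ is at most $r_j$, and if not we skip $P_j$ for this block. Since each length-$\le \ell$ subword lies entirely in some block, the leftmost occurrence of any prefix $P_j[1 \mathinner{\ldotp\ldotp} k]$ starting at position $\le r_j$ is still found in the appropriate block, so correctness is preserved and the running time is unchanged. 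Second, for the long-pattern algorithms (Algorithm~\ref{alg:proc} and Algorithm~\ref{alg:proc2} and their highly-periodic variants), when the sliding window is at position $i$ and we are about to create a request for $P_j$ (which corresponds to an occurrence of $\alpha_j$, resp.\ of the anchor $P_j[k_j \mathinner{\ldotp\ldotp} k_j+\ell-1]$, at a known absolute position in $T$), we compare that absolute starting position against $r_j$ and discard the request if it exceeds $r_j$; likewise the naive extension loop that increments $\ell_j$ is entered only for occurrences passing this test. Suppressing requests only removes work, so the bounds on the number of pending and total requests from the proofs of Lemmas~\ref{lem:nhp}, \ref{lem:hp}, \ref{lem:nhp3}, \ref{lem:hp2}, \ref{lem:nhp2} still apply verbatim.

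One point needs slightly more care: the preprocessing step for long patterns in Section~\ref{sec:prefixes}, where we process groups in decreasing order of index $i$ and truncate $P_j$ to $P_j[1 \mathinner{\ldotp\ldotp} g_i-1]$ whenever $P_j[1 \mathinner{\ldotp\ldotp} g_i]$ does not occur in $T$, must now use occurrences starting at position $\le r_j$ rather than anywhere in $T$. This is exactly the restricted query that Theorem~\ref{thm:equal} is run for, so I would observe that Theorem~\ref{thm:equal} itself extends trivially: its sliding-window scan already examines windows left to right, so we report $P_j$ as ``occurring'' only if some window with start index $\le r_j$ matched. After this adjustment the truncation still moves $P_j$ geometrically to smaller groups, the total length of patterns on which Theorem~\ref{thm:equal} is invoked remains $\Oh(m)$, and the $\Oh(n\log n + m)$ preprocessing bound is retained. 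Combining these three augmented components exactly as in the proof of Theorem~\ref{thm:final2} --- short patterns of length at most $\min(n,s)$ via (the bounded version of) Theorem~\ref{thm:short2}, longer patterns via (the bounded versions of) Lemmas~\ref{lem:hp2} and~\ref{lem:nhp2} over $\Oh(\log n)$ length groups --- yields the claimed $\Oh(n\log n+m)$ time and $\Oh(s)$ space.

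I expect the only genuine obstacle to be bookkeeping of \emph{absolute} positions: in Algorithm~\ref{alg:proc2} the anchor $P_j[k_j \mathinner{\ldotp\ldotp} k_j+\ell-1]$ sits at offset $k_j-1$ inside $P_j$, and a match of the anchor with $T[i \mathinner{\ldotp\ldotp} i+\ell-1]$ corresponds to an occurrence of $P_j$ starting at $i-k_j+1$, so the test must be $i - k_j + 1 \le r_j$; getting every such offset right across the two-window scheme and its two request queues is the fiddly part, but it is purely mechanical and introduces no new ideas or costs.
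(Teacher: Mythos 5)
The paper itself gives no proof of this theorem, asserting only that the modification is ``straightforward,'' so there is nothing to compare against line by line. Your high-level plan --- re-run each component of Theorem~\ref{thm:final2} while discarding any detected occurrence whose starting position exceeds $r_j$ --- is the right idea, and your treatment of the long-pattern machinery (the sliding-window algorithms, the two request queues, the anchor offset $i-k_j+1 \le r_j$ test, and the $r_j$-restricted group-assignment via Theorem~\ref{thm:equal}) is sound: there a request is created from an occurrence at a \emph{specific} starting position, so filtering on that position is a local, well-defined operation that only removes work.

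The gap is in the short-pattern component. In the block algorithm of Theorem~\ref{thm:short2}, the only occurrence you are handed is $T_i[b..e]$, the \emph{leftmost} occurrence of the \emph{twin} node, i.e.\ of the longest prefix of $P_j$ occurring anywhere in $T_i$. You propose to skip $P_j$ for this block if the absolute position of $b$ exceeds $r_j$. But the leftmost occurrence of a shallower ancestor of the twin (a shorter prefix $P_j[1..k]$ with $\ell_j<k<e-b+1$) can lie strictly to the left, possibly at a position $\le r_j$, while the twin's leftmost occurrence lies beyond $r_j$. Concretely: take $\ell=8$, $T_1=T[1..16]$, $P_j=\mathtt{abcde}$, $r_j=5$, with $\mathtt{abc}$ occurring at position~$4$, $\mathtt{abcde}$ occurring only at position~$10$, and $\mathtt{abcd}$ not occurring at any position $\le 5$. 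The twin of $P_j$ in $T_1$'s suffix tree is the locus of $\mathtt{abcde}$ with leftmost occurrence at~$10>r_j$, so your test skips the block; yet the correct answer is $\ell_j=3,\,o_j=4$, and it is not recovered by any later block since all subsequent blocks start past position~$r_j$. Your sentence ``the leftmost occurrence of any prefix $P_j[1..k]$ starting at position $\le r_j$ is still found in the appropriate block'' is true of $T$, but the twin mechanism does not expose that occurrence: it only exposes the twin's own leftmost occurrence, and you never look at the twin's ancestors. Fixing this requires an extra ingredient, e.g.\ storing at each suffix-tree node the minimum leaf label (which is monotone along root-to-leaf paths) and answering, for each pattern, a ``deepest ancestor of the twin whose minimum leaf label is $\le r_j-(\text{block offset})$'' query; one then has to argue this can be done within the $\Oh(s\frac{n}{\ell}+n\log\ell+m)$ budget (for instance by noting the constraint is active for only $\Oh(1)$ blocks per pattern and using constant-time weighted-ancestor queries). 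As written, the proposal does not establish correctness of the short-pattern case.
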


\section{Las Vegas Algorithms}\label{app:lv}

As shown below, it is not difficult to modify our dictionary matching algorithm so that it 
always verifies the correctness of the answers.
Assuming that we are interested in finding just the leftmost occurrence of every pattern,
we obtain an $\Oh(n\log n+m)$-time Las Vegas algorithm (with inverse-polynomial failure probability).

In most cases, it suffices to naively verify in $\Oh(|P_j|)$ time whether the leftmost occurrence of $P_j$ detected by the algorithm is valid. 
If it is not, we are guaranteed that the fingerprints $\Phi$ admit a false-positive. Since this event happens
with inverse-polynomial probability, a failure can be reported.

This simple solution remains valid for short patterns and non-highly periodic long patterns. 
For highly periodic patterns, the situation is more complicated. 
The algorithm from Lemma~\ref{lem:hp} assumes that we are correctly detecting all
occurrences of every $\alpha_{j}$ so that we can filter out the shiftable ones.
Verifying these occurrences naively might take
too much time, because it is not enough check just one
occurrence of every~$\alpha_{j}$.

Recall that $\per(\alpha_{j})\leq \frac{1}{3}\ell$.
If the previous occurrence of $\alpha_{j}$ was at position $i\geq i-\frac{1}{2}\ell$,
we will check if $\per(\alpha_{j})$ is a period of $T[i'..i+\ell-1]$.
If so, either both occurrences (at position $i'$ and at position $i$)
are false-positives, or none of them is, and the occurrence at position $i'$
can be ignored. Otherwise, at least one occurrence is surely false-positive, and we declare a failure. 
To check if $\per(\alpha_{j})$ is a period of $T[i'..i+\ell-1]$,
we partition $T$ into overlapping blocks $T_{1}=[1..\frac{2}{3}\ell], T_{2}=[\frac{1}{3}\ell+1..\frac{4}{3}\ell], \ldots$.
Let $T_{t}=T[(t-1)\frac{1}{3}\ell+1..(t+1)\frac{1}{3}\ell]$ be the rightmost such block fully inside $T[1..i+\ell-1]$.
We calculate the period of $T_{t}$ using Lemma~\ref{lem:comper} in $\Oh(\ell)$
time and $\Oh(1)$ space, and then calculate  how far the period extends
to the left and to the right, terminating if it extends very far.
Formally, we calculate the largest $e< (t+2)\frac{1}{3}\ell$ and the smallest
$b > (t-2)\frac{1}{3}\ell-\frac{1}{2}\ell$ such that $\per(T_{t})$ is a period
of $T[b..e]$. This takes $\Oh(\ell)$ time for every $t$ summing up to $\Oh(n)$ total time.
Then, to check if $\per(\alpha_{j})$ is a period of
$T[i'..i+\ell-1]$ we check if it divides the period of $\per(T_{t})$ and furthermore
$r \geq i+\ell-1$ and $\ell \leq i'$. 
Finally, we naively verify the the reported leftmost occurrences of $P_j$.
Consequently, Las Vegas randomization suffices in Thoerem~\ref{thm:final}.

For Thm.~\ref{thm:final2}, we run the Las Vegas version of Thm.~\ref{thm:final} with $P_j[1..\ell_j]$ and $P_j[1..\ell_j+1]$ as patterns to make sure that the former occur in $T$ but the latter do not. 
For Thm.~\ref{thm:gen} we also use Thm.~\ref{thm:final}, but this time we need to see where the reported leftmost occurrences start compared to bounds $r_j$.

\section{Approximating LZ77 in Small Space}
\label{sec:lz77approx}
A non-empty fragment $T[i..j]$ is called a \emph{previous fragment} if the corresponding subword
occurs in $T$ at a position $i'<i$.
A \emph{phrase} is either a previous fragment or a single letter not occurring before in $T$.
The LZ77-factorization of a text $T[1..n]$ is a greedy factorization of $T$ into $z$ \emph{phrases},
$T=f_1f_2\dots f_z$, such that each $f_i$ is as long as possible.
To formalize the concept of LZ77-approximation, we first make the following definition.

\begin{definition}
Let $w=g_1g_2\dots g_a$ be a factorization of $w$ into $a$ phrases. We call it
$c$-optimal if the fragment corresponding to the concatenation of any $c$ consecutive phrases $g_i\ldots g_{i+c-1}$
is not a previous fragment.
\end{definition}

A $c$-optimal factorization approximates the LZ77-factorization in the number of factors, as the following observation
states.
However, the stronger property of $c$-optimality is itself useful in certain situations.

\begin{observation}\label{obs:approx}
If $w=g_{1}g_{2}\dots g_{a}$ is a $c$-optimal factorization of $w$ into $a$ phrases, and the LZ77-factorization of $w$ consists of
$z$ phrases, then $a\leq  c \cdot z$.
\end{observation}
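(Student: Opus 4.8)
The plan is to prove Observation~\ref{obs:approx} by a simple counting/charging argument that compares how far each factorization can ``advance'' through $w$ before it is forced to end a block of phrases. The key observation is that in the LZ77-factorization, greediness means each phrase $f_k$ extends as far as possible: the fragment $f_1\cdots f_k$ followed by one more character is \emph{not} a previous fragment (otherwise $f_k$ could have been longer), equivalently $f_{k+1}$ starts exactly where the maximal previous fragment starting at that position would end. The $c$-optimality of $g_1\cdots g_a$ says that no $c$ consecutive phrases $g_i\cdots g_{i+c-1}$ form a previous fragment. I want to turn this into the bound $a \le c\cdot z$.

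First I would set up the comparison by looking at the phrase boundaries of the two factorizations as positions $0 = p_0 < p_1 < \dots < p_z = n$ (for LZ77, $f_k = T[p_{k-1}+1..p_k]$) and $0 = q_0 < q_1 < \dots < q_a = n$ (for the $c$-optimal one, $g_k = T[q_{k-1}+1..q_k]$). The core claim is that between any two consecutive LZ77 boundaries $p_{k-1}$ and $p_k$ there can be at most $c$ of the $q$-boundaries, i.e. $|\{\, t : p_{k-1} < q_t \le p_k \,\}| \le c$ — summing this over $k=1,\dots,z$ gives $a \le c\cdot z$ (being slightly careful with endpoints, but $q_0=p_0$ and $q_a=p_z$ are shared so the count is clean).

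To prove this local claim I would argue by contradiction: suppose some LZ77 phrase $f_k = T[p_{k-1}+1..p_k]$ strictly contains more than $c$ of the $q$-boundaries in its interior, so there are indices $i$ with $p_{k-1} \le q_{i-1}$ and $q_{i+c-1} \le p_k$ while the block $g_i\cdots g_{i+c-1}$ is a proper sub-fragment of $f_k$ (here I use that we can pick $c$ consecutive phrases lying entirely inside $T[p_{k-1}+1..p_k]$). Then I invoke greediness of LZ77: the fragment $T[p_{k-1}+1..p_k]$ is a previous fragment (it is the phrase $f_k$, and if $f_k$ is a single new letter it has length $1 < c$ phrases cannot fit, so we may assume $f_k$ is a genuine previous fragment), hence \emph{every} sub-fragment of $f_k$ is also a previous fragment — in particular the fragment corresponding to $g_i\cdots g_{i+c-1}$ is a previous fragment, contradicting $c$-optimality. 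The one edge case to handle is when the $c$ consecutive $q$-phrases are not all strictly inside a single $f_k$ but straddle a boundary; that is exactly why more than $c$ boundaries inside one phrase is the right quantity to forbid, and the pigeonhole step is: if $a > c\cdot z$ then since there are $z$ LZ77 phrases, some phrase's half-open span $(p_{k-1}, p_k]$ contains at least $c+1$ of the points $q_1,\dots,q_a$, which yields $c$ consecutive $g$-phrases whose union is contained in $f_k$.

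The main obstacle I expect is purely bookkeeping at the boundaries — making precise ``a sub-fragment of a previous fragment is a previous fragment'' (immediate: if $T[i..j]$ occurs at $i'<i$ then $T[i''..j'']\subseteq T[i..j]$ occurs at $i'+(i''-i) < i''$) and correctly ruling out the degenerate case where $f_k$ is a fresh single letter (then it cannot contain even one interior $q$-boundary, let alone $c$). There is no hard combinatorics here; the statement follows once the pigeonhole count and the greedy-maximality property of LZ77 are lined up correctly.
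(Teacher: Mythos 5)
Your argument is correct. The paper states this as an Observation without proof, so there is nothing to compare against directly; your pigeonhole argument is the natural one. The key steps all check out: partitioning $\{1,\dots,a\}$ by which half-open interval $(p_{k-1},p_k]$ each $q_t$ falls into, showing that a single such interval can contain at most $c$ of the $q_t$ (since $c+1$ of them would produce $c$ consecutive $g$-phrases whose concatenation is a sub-fragment of $f_k$, hence a previous fragment when $f_k$ is one, contradicting $c$-optimality), and handling the degenerate case where $f_k$ is a fresh single letter (then $(p_{k-1},p_k]$ contains only one integer, so the count is automatically $\le 1\le c$). The one lemma you rely on --- a sub-fragment of a previous fragment is a previous fragment --- you state and justify correctly. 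Minor wording quibble: you say "strictly contains more than $c$ of the $q$-boundaries in its interior," but what you actually (correctly) use is the half-open interval $(p_{k-1},p_k]$, which includes the right endpoint; the calculation matches the half-open interval, so the argument is sound despite the informal phrasing.
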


We first describe how to use the dictionary matching algorithm described in Thm.~\ref{thm:final}
to produce a $2$-optimal factorization of $w[1..n]$ in $\Oh(n\log n)$ time and $\Oh(z)$ working
space. Then, the resulting parse can be further refined to produce a $(1+\eps)$-optimal factorization
in $\Oh(\eps^{-1}n\log n)$ additional time and the same space using the extension of the dictionary matching
algorithm from Thm~\ref{thm:final2}.

\subsection{2-Approximation Algorithm}

\subsubsection{Outline.}
Our algorithm is divided into three phases, each of which refines the factorization from the previous phase:
\begin{enumerate}[\bfseries{Phase }1.]
	\item Create a factorization of $T[1..n]$ stored implicitly as $z$ \emph{chains}
	consisting of $\Oh(\log n)$ phrases each.
	\item Try to merge phrases within the chains to produce an $\Oh(1)$-optimal factorization.
	\item Try to merge adjacent factors as long as possible to produce the final $2$-optimal factorization.
\end{enumerate}
Every phase takes $\Oh(n\log n)$ time and uses $\Oh(z)$ working space. In the end, we get a 2-approximation of the LZ77-factorization.
Phases 1 and 2 use the very simple multiple pattern matching algorithm for patterns of equal lengths developed
in Thm.~\ref{thm:equal}, while Phase 3 requires the general multiple pattern matching algorithm obtained in Thm.~\ref{thm:final}.

\subsubsection{Phase 1.}
To construct the factorization, we imagine creating a binary tree on top the text $T$ of length $n=2^k$ -- see also Fig.~\ref{fig:example_approx} (we implicitly pad $w$ with
sufficiently many $ \$$'s to make its length a power of $2$).
The algorithm works in $\log n$ rounds, and the $i$-th round works on level $i$ of the tree, starting at $i=1$ (the children of the root).
On level $i$, the tree divides $T$ into $2^i$ blocks of size $n/2^i$; the aim is to identify previous fragments
among these blocks and declare them as phrases.
(In the beginning, no phrases exist, so all blocks are unfactored.)
To find out if a block is a previous fragment, we use Thm.~\ref{thm:equal} and test whether the leftmost occurrence of the corresponding subword is the block itself.
The exploration of the tree is naturally terminated at the nodes corresponding to the previous fragments (or single letters not occurring before), forming the leaves of a (conceptual) binary tree.
A pair of leaves sharing the same parent is called a \emph{cherry}. The block corresponding to the common parent
is \emph{induced} by the cherry.
To analyze the algorithm, we make the following observation:

\begin{figure}[t]
	\begin{center}
		\includegraphics[scale=1.2]{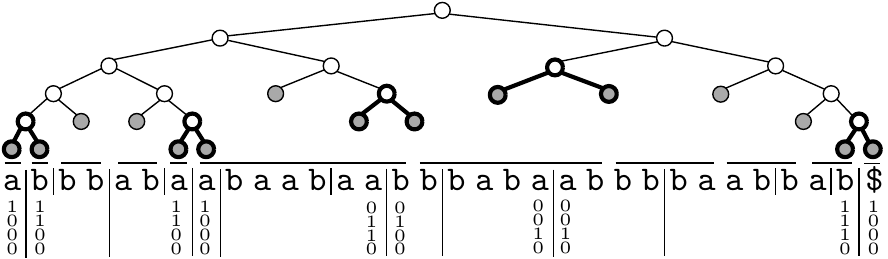}
	\end{center}
	\caption{An illustration of Phase 1 of the algorithm, with the ``cherries'' depicted in thicker lines. The horizontal lines represent the LZ77-factorization and the vertical lines depict factors induced by the tree. Longer separators are drawn between chains, whose lengths are written in binary with the least significant bits on top.
	}
		\label{fig:example_approx}
\end{figure}

\begin{fact}
\label{fct:cherries}
A block induced by a cherry is never a previous fragment.
Therefore, the number of cherries is at most $z$.
\end{fact}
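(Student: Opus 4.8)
The plan is to prove the two assertions separately: the first is immediate from the structure of the exploration in Phase~1, and the second follows from a short combinatorial observation about the LZ77 parse combined with a disjointness argument for cherries.

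For the first assertion, recall that Phase~1 processes the levels of the conceptual tree top-down, and it subdivides a block into its two children only when that block was tested (via Theorem~\ref{thm:equal}) and found \emph{not} to be a previous fragment; a block that is a previous fragment, or a single new letter, is declared a leaf and never subdivided. A block induced by a cherry is, by definition, the common parent of two leaves, hence an internal node of the conceptual tree, so it was subdivided and therefore was found not to be a previous fragment. This gives the first sentence directly.

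For the bound on the number of cherries, I would first argue that distinct cherries induce pairwise disjoint blocks. The conceptual tree is a full binary tree (each non-leaf has exactly two children, obtained by halving a block), and the parent of a cherry has no descendants other than its two leaf children; hence the subtrees rooted at two distinct cherry-parents are disjoint, so the corresponding blocks are disjoint fragments of $T$. I would then invoke the claim: every fragment of $T$ that is not a previous fragment contains the starting position of some phrase of the LZ77-factorization $T=f_1\cdots f_z$. Granting the claim, each cherry-induced block is a non-previous fragment (by the first assertion), so it contains at least one of the $z$ phrase-starting positions; since the blocks are pairwise disjoint, these phrase-starts are distinct, whence there are at most $z$ cherries, as in Observation~\ref{obs:approx}.

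It remains to prove the claim, which I expect to be the only real content. Let $g=T[a..b]$ be a fragment. If $g$ is not contained in a single phrase, say $a$ lies in $f_i$ and $b$ in $f_j$ with $i<j$, then $f_{i+1}$ starts at position $e_i+1\in[a+1,b]$, so $g$ contains a phrase start. Otherwise $g$ lies inside one phrase $f_i=T[s..e]$. If $|f_i|\ge 2$, then by definition $f_i$ is a previous fragment, so it occurs at some earlier position $s'<s$, and then $g$ occurs at position $a-(s-s')\ge s'\ge 1$, which is $<a$; thus $g$ is itself a previous fragment, so this case is impossible when $g$ is not a previous fragment. If $|f_i|=1$, then $g=f_i$ is a single letter not occurring before, and $a=s$ is the start of $f_i$, so again $g$ contains a phrase start. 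The main obstacle, such as it is, is handling the length-one-phrase corner case and making precise that a subfragment of a previous fragment is again a previous fragment; the rest is bookkeeping.
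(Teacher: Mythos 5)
Your proposal is correct and follows essentially the same route as the paper: the first sentence is immediate from the fact that only non-previous-fragment blocks get subdivided, and the counting argument rests on disjointness of cherry-induced blocks plus assigning a distinct LZ77 phrase boundary to each. The only cosmetic difference is that the paper assigns to each cherry a phrase \emph{ending} inside its block, whereas you assign a phrase \emph{start}; both work, and you simply spell out the short argument (including the length-one-phrase corner case) that the paper leaves implicit.
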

\begin{proof}
The former part follows from construction. To prove the latter,
observe that the blocks induced by different cherries are disjoint and hence each cherry can be assigned a unique LZ77-factor
ending within the block.
\end{proof}

Consequently, while processing level $i$ of the tree, we can afford storing all cherries generated so far on a sorted linked list $\llist$. The remaining already
generated phrases are not explicitly stored.
In addition, we also store a sorted linked list $\llist_i$ of all still \emph{unfactored} nodes on the current level $i$ (those for which the corresponding
blocks are tested as previous fragments).
Their number is bounded by $z$ (because there is a cherry below every node on the list),
so the total space is $\Oh(z)$.
Maintaining both lists sorted is easily accomplished by scanning them
in parallel with each scan of $T$, and inserting new cherries/unfactored nodes at their correct places.
Furthermore, in the $i$-th round we apply Thm.~\ref{thm:equal} to at most $2^{i}$ patterns of length $n/2^{i}$, so the total time is
$\sum_{i=1}^{\log n}\Oh(n+2^{i}\log(2^{i}))=\Oh(n\log n)$.

Next, we analyze the structure of the resulting factorization. Let $h_{x-1}h_x$ and $h_{y}h_{y+1}$ be the two consecutive
cherries. The phrases $h_{x+1}\ldots h_{y-1}$ correspond to the right siblings
of the ancestors of $h_x$ and to the left siblings of the ancestors of $h_y$ (no further than to the lowest common ancestor of $h_x$ and $h_y$).
This naturally partitions $h_x h_{x+1}\ldots h_{y-1}h_y$ into two parts,
called an \emph{increasing chain} and a \emph{decreasing chain} to depict the behaviour of phrase
lengths within each part. Observe that these lengths are powers of two, so the structure of a chain
of either type is determined by the total length of its phrases, which can be interpreted
as a bitvector with bit $i'$ set to 1 if there is a phrase of length $2^{i'}$ in the chain. 
Those bitvectors can be created while traversing the tree level by level, passing the partially 
created bitvectors down to the next level $\llist_{i+1}$ until finally storing them at the cherries in $\llist$. 

At the end we obtain a sequence of chains of alternating types,
see Fig.~\ref{fig:example_approx}.
Since the structure of each chain follows from its length, we store the sequence of chains
rather the actual factorization, which might consist of $\Theta(z\log n)=\omega(z)$ phrases.
By Fact~\ref{fct:cherries}, our representation uses $\Oh(z)$ words of space and
the last phrase of a decreasing chain
concatenated with the first phrase of the consecutive increasing chain never form a previous fragment (these phrases form the block
induced by the cherry).

\subsubsection{Phase 2.}
In this phase we merge phrases within the chains. 
We describe how to process increasing chains; the decreasing are handled, mutatis mutandis, analogously.
We partition the phrases $h_{\ell}\ldots h_r$ within a chain into groups.

For each chain we maintain an \emph{active} group, initially consisting of $h_\ell$, and scan the remaining phrases in the left-to-right order.
We either append a phrase $h_i$ to the active group $g_j$,
or we output $g_j$ and make $g_{j+1}=h_i$ the new active group.
The former action is performed if and only if the fragment of length $2|h_i|$ starting at the same position as $g_j$ is a previous fragment.
Having processed the whole chain, we also output the last active group.

\begin{fact}
\label{fct:nothree}
Within every chain every group $g_j$ forms a valid phrase,
but no concatenation of three adjacent groups $g_jg_{j+1}g_{j+2}$ form a previous fragment.
\end{fact}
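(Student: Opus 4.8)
The plan is to prove the two assertions of Fact~\ref{fct:nothree} separately, working entirely inside one \emph{increasing} chain $h_\ell h_{\ell+1}\cdots h_r$ (decreasing chains being symmetric). Two features of the setup will be used repeatedly: (i) every $h_t$ is a leaf of the conceptual tree built in Phase~1, hence already a valid phrase (a previous fragment, or a single fresh letter); and (ii) inside an increasing chain the phrases occupy consecutive fragments of $T$ and their lengths $|h_\ell|<|h_{\ell+1}|<\cdots<|h_r|$ are strictly increasing powers of two. I will also use the elementary fact that a prefix of a previous fragment is again a previous fragment: if $T[i..j]$ occurs at some $i'<i$, then so does $T[i..j']$ for every $i\le j'\le j$.

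\emph{Every group is a valid phrase.} A singleton group $\{h_t\}$ is fine by~(i). If a group is $g=h_a\cdots h_b$ with $a<b$, then the last step forming it appended $h_b$ to the active group, which by the acceptance rule means the fragment of length $2|h_b|$ starting at the left end $q$ of $g$ is a previous fragment. By~(ii) the lengths $|h_a|<\cdots<|h_b|$ are distinct powers of two, so $|g|=\sum_{t=a}^{b}|h_t|<2|h_b|$; hence $g=T[q..q+|g|-1]$ is a proper prefix of that previous fragment, and therefore a previous fragment itself.

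\emph{No three adjacent groups form a previous fragment.} Here the choice of which inter-group boundary to exploit is the whole point. Fix consecutive groups $g_j,g_{j+1},g_{j+2}$, let $q_0$ be the left end of $g_j$, let $h_b$ be the last phrase of $g_j$ and $h_{b+1}$ the first phrase of $g_{j+1}$ — these are consecutive in the chain, because $g_j$ was closed exactly when appending $h_{b+1}$ was rejected. Rejection means $P:=T[q_0..q_0+2|h_{b+1}|-1]$ is \emph{not} a previous fragment. On the other hand, if $h_c$ is the last phrase of $g_{j+1}$ then $|h_c|\ge|h_{b+1}|$, and the first phrase $h_{c+1}$ of $g_{j+2}$ satisfies $|h_{c+1}|>|h_c|\ge|h_{b+1}|$, hence $|h_{c+1}|\ge 2|h_{b+1}|$ (powers of two); consequently $|g_{j+2}|\ge|h_{c+1}|\ge 2|h_{b+1}|$ and $|g_jg_{j+1}g_{j+2}|=|g_j|+|g_{j+1}|+|g_{j+2}|>2|h_{b+1}|=|P|$. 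Thus $P$ is a proper prefix of the fragment $g_jg_{j+1}g_{j+2}$ (both start at $q_0$), and since $P$ is not a previous fragment, neither is $g_jg_{j+1}g_{j+2}$.

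For decreasing chains one runs the symmetric procedure (scanning, and measuring the tested fragment, from the other end) and the same length bookkeeping yields the claim; a couple of technicalities — all fragments tested by the algorithm lie inside $T$ once it is padded to a power-of-two length, and chains with fewer than three groups make the second assertion vacuous — are routine. The main obstacle is exactly the bookkeeping in the previous paragraph: the naive attempt to use the boundary between $g_{j+1}$ and $g_{j+2}$ fails, because the rejected test fragment there need not be a prefix of $g_jg_{j+1}g_{j+2}$; one must instead use the boundary between $g_j$ and $g_{j+1}$ and invoke the geometric growth of the power-of-two phrase lengths to guarantee that $g_{j+2}$ alone is long enough to make the rejected fragment a prefix of the concatenation.
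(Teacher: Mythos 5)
Your proof is correct and takes essentially the same approach as the paper's: both parts hinge on the acceptance/rejection test whose fragment starts at the left end of the group, and the second part uses the rejection at the boundary between $g_j$ and $g_{j+1}$ together with the power-of-two length growth to show the rejected fragment is a prefix of $g_jg_{j+1}g_{j+2}$; the paper reaches the same length bound via $|g_{j+1}g_{j+2}|>2|h_i|$ rather than via $|g_{j+2}|\ge 2|h_{b+1}|$, but this is only a cosmetic difference in the arithmetic.
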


\begin{proof}
Since the lengths of phrases form an increasing sequence of powers of two, at the moment we need to decide
if we append $h_i$ to $g_j$ we have $|g_j|\le |h_\ell\ldots h_{i-1}|<|h_i|$,
so $2|h_i|>|g_jh_i|$, and thus we are guaranteed if we append $g_j$,  then $g_jh_i$ is a previous factor.
Finally, let us prove the aforementioned optimality condition, i.e., that $g_jg_{j+1}g_{j+2}$ is not a previous fragment for any three consecutive groups.
Suppose that we output $g_j$ while processing $h_i$, that is, $g_{j+1}=h_{i}\ldots h_{i'}$.
We did not append $h_i$ to $g_j$, so the fragment of length $2|h_i|$ starting at the same position as $g_j$ is not a previous fragment.
However, $|g_jg_{j+1}g_{j+2}|>|g_{j+1}g_{j+2}|\ge |h_{i}h_{i+1}|>2|h_i|$, so this immediately implies that $g_jg_{j+1}g_{j+2}$ is not a previous fragment. 
\end{proof}

The procedure described above is executed in parallel for all chains, each of which maintains just the length of its active group.
In the $i$-th round only chains containing a phrase of length $2^i$ participate (we use bit operations to verify which
chains have length containing $2^{i}$ in the binary expansion).
These chains provide fragments of length $2^{i+1}$
and Thm.~\ref{thm:equal} is applied to decide
which of them are previous fragments. The chains modify their active groups based on the answers;
some of them may output their old active groups.
These groups form phrases of the output factorization, so the space required to store them is
amortized by the size of this factorization. As far as the running time is concerned,
we observe that no more than $\min(z,\frac{n}{2^i})$ chains participate in the $i$-th round .
Thus, the total running time is
$\sum_{i=1}^{\log n}\Oh(n+\frac{n}{2^{i}}\log\frac{n}{2^{i}})=\Oh(n\log n)$.
To bound the overall approximation guarantee, suppose there are five consecutive output phrases forming a previous fragment.
By Fact~\ref{fct:cherries}, these fragments cannot contain a block induced by any cherry. Thus, the phrases
are contained within two chains. However, by Fact~\ref{fct:nothree} no three consecutive phrases obtained
from a single chain form a previous fragment. Hence the resulting factorization is 5-optimal.

\subsubsection{Phase 3.}
The following lemma achieves the final 2-approximation:
\begin{lemma}
Given a $c$-optimal factorization, one can compute a 2-optimal factorization
using $\Oh(c\cdot n\log n)$ time and $\Oh(c\cdot z)$ space.
\end{lemma}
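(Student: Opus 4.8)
The plan is to run a single left-to-right greedy merging sweep over the phrases of the given $c$-optimal factorization $g_1g_2\ldots g_a$, using the ``bounded-position'' dictionary matching of Theorem~\ref{thm:gen} to decide, for each current phrase, how far it can be extended while remaining a previous fragment, and then to iterate this in $\Oh(c)$ rounds until no further merging is possible. First I would observe that, since the input is $c$-optimal, $a\le c\cdot z$ by Observation~\ref{obs:approx}, so maintaining the whole phrase list costs $\Oh(c\cdot z)$ space, as allowed. In one round, I process the phrases in order; at a phrase beginning at position $o$ I want the longest previous fragment $T[o..e]$ with $o<$ (earlier occurrence) — equivalently the maximum $\ell$ and a position $o'\le o-1$ with $T[o..o+\ell-1]=T[o'..o'+\ell-1]$. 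This is exactly what Theorem~\ref{thm:gen} computes, in parallel for all $s$ patterns, with $r_j=o_j-1$, in $\Oh(n\log n + m)$ time and $\Oh(s)$ space, where here $s\le a\le c\cdot z$ and $m=n$. Each such maximal previous fragment is then snapped down to end at a phrase boundary (the largest boundary $\le e$), the covered phrases are merged into one, and the remaining phrases keep their boundaries.

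The key structural claim is that a constant number of these rounds suffices. I would argue that each round at least halves the number of phrases that are ``mergeable'' in a suitable sense: after one greedy sweep, if three consecutive output phrases $g'_ig'_{i+1}g'_{i+2}$ formed a previous fragment, then when the sweep reached $g'_i$ it would have absorbed at least through $g'_{i+1}$ (because the maximal previous extension from the start of $g'_i$ reaches at least the end of $g'_{i+1}$, and snapping to a boundary keeps at least that far), contradicting that $g'_i$ was emitted as a separate phrase ending before $g'_{i+1}$. Hence after one sweep the factorization is already $3$-optimal. Repeating the same argument, one more carefully-designed sweep that is willing to merge a phrase with its single successor whenever the pair is a previous fragment turns a $3$-optimal parse into a $2$-optimal one: if after that sweep $g'_ig'_{i+1}$ were a previous fragment, the sweep starting at $g'_i$ would have merged them. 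So a total of $\Oh(1)$ sweeps, not $\Oh(c)$, actually suffice for correctness; the factor $c$ in the time bound comes only from the fact that $s=a=\Oh(c z)$ and from the $\Oh(cz)$ bookkeeping, giving $\Oh(c\cdot n\log n)$ time and $\Oh(c\cdot z)$ space as claimed. (If one prefers the cruder bound, a $c$-optimal parse becomes $2$-optimal after at most $\log_2 c$ halving rounds, which is still $\Oh(c)$ rounds and fits the stated budget with room to spare.)

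The main obstacle is the interaction between ``longest previous fragment starting at $o$'' and ``must end at an existing phrase boundary'': the global LZ-type greedy choice need not respect the boundaries of the current parse, so I must verify that snapping the maximal extension down to the nearest boundary still makes progress — i.e., still reaches strictly past the immediately following phrase when $c$-optimality is violated locally. This is where Fact~\ref{fct:cherries}-style disjointness is \emph{not} available, so the argument has to be made directly from the definition of $c$-optimality and from the monotonicity of ``is a previous fragment'' under taking prefixes of the candidate fragment that still start at the same position. A secondary, more technical point is using Theorem~\ref{thm:gen} correctly across rounds: between rounds the set of patterns (the current phrases) and their start positions change, but each invocation is independent and self-contained, so the per-round cost is uniformly $\Oh(n\log n+n)=\Oh(n\log n)$ and the round count is $\Oh(c)$, which I would present as the final accounting.
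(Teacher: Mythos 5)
Your approach is genuinely different from the paper's, and the core correctness idea is sound, but the time analysis has a gap that comes from a misidentification of the patterns fed to Theorem~\ref{thm:gen}.

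The paper's proof does not use longest-prefix queries at all: it runs exactly $c$ iterations, and in each iteration feeds Theorem~\ref{thm:final} the $\Oh(cz)$ patterns $g_ig_{i+1}$ (total length $\le 2n$) to decide which adjacent pairs are previous fragments, then merges greedily left to right (forbidding a phrase from participating in two merges in the same round). The $2$-optimality at the end is proved by observing that if $g_{i-1}g_i$ were still a previous fragment, then the boundary between them survived every round, so the predecessor must have been merged in every round, forcing $g_{i-1}$ to absorb $\ge c$ input phrases while being a previous fragment, contradicting $c$-optimality. Your route instead precomputes, for every current phrase start, the longest previous fragment beginning there, then sweeps and snaps to boundaries. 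Your sweep correctness argument (a previous fragment $g'_ig'_{i+1}$ would force the snap at the start of $g'_i$ to reach at least the end of $g'_{i+1}$, since that end is itself a boundary $\le e$) is correct and in fact stronger than you state: by that very argument a \emph{single} sweep already yields a $2$-optimal parse, not merely a $3$-optimal one, so your second round is unnecessary. This is a cleaner round structure than the paper's and is a legitimately different proof.

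The gap is in what you feed to Theorem~\ref{thm:gen}. You write ``the set of patterns (the current phrases)'' and conclude $m=n$, but if the pattern for phrase $g_j$ is $g_j$ itself, the theorem only reports the longest prefix of $g_j$ occurring earlier, which is trivially all of $g_j$ (every phrase is already a previous fragment or a new letter) and tells you nothing about how far the phrase can be \emph{extended}. To learn the extension you must take each pattern to be a fragment of $T$ starting at $o_j$ and running \emph{past} the end of $g_j$. Taking the whole suffix $T[o_j..n]$ gives $m=\Theta(na)$, which blows the budget. The fix is to truncate pattern $j$ at the end of phrase $g_{j+c-1}$ (or at $n$): by $c$-optimality of the input parse the true longest previous fragment starting at $o_j$ is strictly shorter than $c$ consecutive input phrases, so nothing is lost, and each input phrase is covered by at most $c$ patterns, giving $m=\Oh(cn)$. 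One invocation of Theorem~\ref{thm:gen} then costs $\Oh(n\log n + cn)=\Oh(cn\log n)$ and one sweep suffices, so the stated bounds hold — but the $c$ in the running time comes from this $\Oh(cn)$ pattern length, not (as you claim) from $s=\Oh(cz)$ or from bookkeeping, since Theorem~\ref{thm:gen}'s running time $\Oh(n\log n+m)$ does not depend on $s$. You should state the truncation explicitly and correct the accounting; without it the proposal as written does not compute the quantities your sweep needs.
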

\begin{proof}
The procedure consists of $c$ iterations. In every iteration we first
detect previous fragments corresponding to concatenations of two adjacent phrases.
The total length of the patterns is up to $2n$, so this takes $\Oh(n\log n + m)=\Oh(n\log n)$ time and $\Oh(c\cdot z)$
space using Thm.~\ref{thm:final}.
Next, we scan through the factorization and merge every phrase $g_i$  with the preceding phrase $g_{i-1}$ 
if $g_{i-1}g_i$ is a previous fragment and $g_{i-1}$ has not been just merged with its predecessor.

We shall prove that the resulting factorization is 2-optimal. Consider a pair of adjacent phrases $g_{i-1}g_i$ in the final
factorization
and let $j$ be the starting position of $g_i$. Suppose $g_{i-1}g_i$ is a previous fragment. Our algorithm performs
merges only, so the phrase ending at position $j-1$ concatenated with the phrase starting at position $j$ 
formed a previous fragment at every iteration. The only reason that these factors were not merged could
be another merge of the former factor. Consequently, the factor ending at position $j-1$ took part in a merge
at every iteration, i.e., $g_{i-1}$ is a concatenation of at least $c$ phrases of the input factorization.
However, all the phrases created by the algorithm form previous fragments, which contradicts the $c$-optimality
of the input factorization.
\end{proof}

\subsection{Approximation Scheme}

The starting point is a 2-optimal factorization into $a$ phrases, which can be found in $\Oh(n\log n)$ time
using the previous method. The text is partitioned into $\frac{\eps}{2}a$ blocks corresponding to 
$\frac{2}{\eps}$ consecutive phrases. Every block is then greedily factorized into phrases.
The factorization is implemented using Thm.~\ref{thm:gen} to compute the longest previous fragments
in parallel for all blocks as follows. Denote the starting positions of the blocks by $b_1<b_2<\ldots$
and let $i_j$ be the current position in the $i$-th block, initially set to $b_i$.
For every $i$, we find the longest previous fragment starting at $i_j$ and fully contained inside
$T[i_j..b_{j+1}-1]$ by computing the longest prefix of $T[i_j..b_{j+1}-1]$ occurring in $T$
and starting in $T[1..i_j-1]$, denoted $T[i_j..i_j+\ell_j-1]$. Then we output every
$T[i_j..i_j+\ell_j-1]$ as a new phrase and increase $i_j$ by $\ell_j$. Because every block,
by definition, can be factorized into $\frac{2}{\eps}$ phrases and the greedy factorization is optimal,
this requires $\Oh(\frac{1}{\eps}n\log n)$ time in total.
To bound the approximation guarantee, observe that
every phrase inside the block, except possibly for the last one, contains an endpoint of a phrase in the
LZ77-factorization. Consequently, the total number of phrases is at most $z+\frac{\eps}{2}a\le(1+\eps)z$.

\begin{theorem}\label{thm:finalLZ}
Given a text $T$ of length $n$ whose LZ77-factorization consists of $z$ phrases, we can factorize
$T$ into at most $2z$ phrases using $\Oh(n\log n)$ time and $\Oh(z)$ space.
Moreover, for any $\eps\in (0,1]$ in $\Oh(\eps^{-1} n\log n)$ time and $\Oh(z)$ space we can compute
a factorization into no more than $(1+\eps)z$ phrases.
\end{theorem}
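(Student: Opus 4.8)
The plan is to assemble the theorem from the three-phase construction and the approximation scheme already described, in two stages. First I would establish the $2z$ bound. Phase~1 produces, in $\Oh(n\log n)$ time and $\Oh(z)$ space, an implicit factorization stored as a sequence of chains, using Thm.~\ref{thm:equal} on $2^i$ patterns of length $n/2^i$ at level $i$; the space bound follows from Fact~\ref{fct:cherries}, since the number of cherries, and hence the number of unfactored nodes per level and the number of stored chains, is at most $z$. Phase~2 merges phrases within chains using Thm.~\ref{thm:equal} again, yielding a $5$-optimal factorization in $\Oh(n\log n)$ time, where $5$-optimality comes from combining Fact~\ref{fct:cherries} (no block induced by a cherry is a previous fragment, so five consecutive output phrases span at most two chains) with Fact~\ref{fct:nothree} (no three consecutive groups within one chain form a previous fragment). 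Phase~3 is the Lemma: starting from a $c$-optimal factorization it produces a $2$-optimal one in $\Oh(c\cdot n\log n)$ time and $\Oh(c\cdot z)$ space via $c$ rounds of pairwise-merging driven by Thm.~\ref{thm:final}. Applying this Lemma with $c=5$ gives a $2$-optimal factorization in $\Oh(n\log n)$ time and $\Oh(z)$ space, and by Observation~\ref{obs:approx} a $2$-optimal factorization has at most $2z$ phrases. This settles the first claim.

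For the second claim I would invoke the approximation scheme: take the $2$-optimal factorization into $a\le 2z$ phrases just obtained, partition the text into $\tfrac{\eps}{2}a$ blocks of $\tfrac{2}{\eps}$ consecutive phrases each, and greedily re-factorize every block in parallel. The greedy step is implemented with Thm.~\ref{thm:gen}: for each block $j$ with current position $i_j$ we compute the longest prefix of $T[i_j..b_{j+1}-1]$ that occurs in $T$ starting before position $i_j$ (i.e.\ with the bound $r_j=i_j-1$ truncated to the block), output it as a phrase, and advance. Since each block is, by construction, expressible as $\tfrac{2}{\eps}$ previous fragments and greedy parsing is optimal within the block, each block contributes at most $\tfrac{2}{\eps}$ phrases, so the total work across all rounds of Thm.~\ref{thm:gen} calls is $\Oh(\eps^{-1} n\log n)$, and the working space stays $\Oh(z)$ because $a=\Oh(z)$ and the per-block state is $\Oh(1)$.

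The remaining point is the bound on the output size. Here I would argue that within any single block, every greedy phrase except possibly the last one must contain (strictly inside, or end exactly at) an endpoint of an LZ77-phrase of $T$: otherwise that greedy phrase together with a bit more would still be a previous fragment, contradicting maximality of the greedy choice against the LZ77-phrase that extends past it. Hence the number of greedy phrases is at most $z$ (one per LZ77-boundary charged once) plus one extra per block, i.e.\ at most $z+\tfrac{\eps}{2}a\le z+\tfrac{\eps}{2}\cdot 2z=(1+\eps)z$. Combining the time and space accounting of both stages gives the stated bounds.

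I expect the main obstacle to be not the running-time bookkeeping — which is a routine geometric-sum argument across the $\Oh(\log n)$ levels and the $c$ or $\eps^{-1}$ rounds — but making the charging arguments for the approximation quality fully precise: in particular, verifying that the "$5$-optimal" conclusion of Phase~2 is not thrown off by phrases straddling a chain boundary (which is exactly where Fact~\ref{fct:cherries} is needed), and verifying in the scheme that each block truly admits a $\tfrac{2}{\eps}$-phrase previous-fragment cover inherited from $2$-optimality so that the greedy parse cannot overshoot that count. These are the places where an off-by-one or a boundary block could silently break the $(1+\eps)z$ guarantee.
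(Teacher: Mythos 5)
Your proposal follows the paper's own proof essentially step for step: Phase~1 building the $\Oh(z)$ chain representation via Thm.~\ref{thm:equal} and Fact~\ref{fct:cherries}, Phase~2 merging within chains to obtain a $5$-optimal factorization via Fact~\ref{fct:nothree}, Phase~3 applying the lemma with $c=5$ to reach $2$-optimality, and then the block-wise greedy refinement with Thm.~\ref{thm:gen} and the charging argument that each non-final greedy phrase per block contains an LZ77 boundary, giving $z+\tfrac{\eps}{2}a\le(1+\eps)z$. The reasoning and complexity accounting coincide with the paper's, so there is nothing substantive to add beyond what you already wrote.
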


\paragraph{\textbf{Acknowledgments.}}
The authors would like to thank the participants of the Stringmasters 2015 workshop in Warsaw,
where this work was initiated. We are particularly grateful to Marius Dumitran, Artur Jeż, and Patrick K.\ Nicholson.

\bibliographystyle{abbrv}
\bibliography{mult}

\end{document}